\pdfoutput=1 

\documentclass{article}

\usepackage{arxiv-style}

\usepackage[utf8]{inputenc} 
\usepackage[T1]{fontenc}    
\usepackage{csquotes}
\usepackage{placeins, float}
\usepackage{hanging}
\usepackage{graphicx}
\usepackage{setspace}
\usepackage{multirow, booktabs, subcaption, colortbl, longtable, caption}
\usepackage{lmodern}
\usepackage[bottom]{footmisc}
\usepackage{xcolor}
\usepackage{pdflscape, afterpage, rotating}
\usepackage{tabularx, threeparttable, makecell}
\newcolumntype{P}[1]{>{\RaggedRight\hspace{0pt}}p{#1}}
\newcolumntype{C}[1]{>{\Centering\hspace{0pt}}m{#1}}
\usepackage[edges]{forest}
\usepackage{tikz}
\usetikzlibrary{backgrounds,tikzmark, calc,arrows,shapes,positioning,shadows,trees,mindmap,arrows.meta}
\colorlet{linecol}{black!75}
\usepackage{amsmath, amsthm, amssymb, mathtools, nccmath}
\newtheorem{theorem}{Theorem}
\newtheorem{corollary}[theorem]{Corollary}
\usepackage{wrapfig}
\usepackage{comment}
\usepackage{xspace}
\usepackage{array}
\usepackage{ragged2e}

\usepackage{tcolorbox}

\definecolor{red}{HTML}{F03D2D}
\definecolor{skyblue}{HTML}{90DDF0}
\definecolor{green}{HTML}{C8D96F}
\definecolor{orange}{HTML}{EF8A17}
\definecolor{yellow}{HTML}{F5C900}
\definecolor{purple}{HTML}{BA42C0}
\definecolor{teal}{HTML}{17BEBB}
\definecolor{greyblue}{HTML}{9BAFD9}
\definecolor{bluegreen}{HTML}{9FD8CB}
\definecolor{pink}{HTML}{D14081}
\usepackage{hyperref}
\usepackage{orcidlink}

\hypersetup{
    colorlinks=true,
    linkcolor=blue,
    citecolor=blue,
    urlcolor=blue,
    pdftitle={Adaptive Regularization via Extreme Value Distributions for Gaussian Graphical Models},
    pdfauthor={Alexander P. Christensen, Jeongwon Choi, Haoyi Yang, Lingzhou Xue},
    pdfsubject={Gaussian graphical models, regularization, extreme value theory},
    pdfkeywords={Gaussian graphical models, regularization, extreme value theory}
}
\usepackage{bookmark}
\IfFileExists{xurl.sty}{\usepackage{xurl}}{} 
\definecolor{eganetbg}{HTML}{D5896F}
\definecolor{eganettxt}{HTML}{4A1F08}
\definecolor{qgraphbg}{HTML}{DAB785}
\definecolor{qgraphtxt}{HTML}{4A320A}
\definecolor{ggmmodbg}{HTML}{70A288}
\definecolor{ggmmodtxt}{HTML}{1C3D2C}
\definecolor{nonregbg}{HTML}{72A6B1}
\definecolor{nonregtxt}{HTML}{1A3B42}
\definecolor{ncvbg}{HTML}{83628A}
\definecolor{ncvtxt}{HTML}{2E1535}
\colorlet{eganetbg}{eganetbg!50!white}
\colorlet{qgraphbg}{qgraphbg!50!white}
\colorlet{ggmmodbg}{ggmmodbg!50!white}
\colorlet{nonregbg}{nonregbg!50!white}
\colorlet{ncvbg}{ncvbg!50!white}
\definecolor{anybg}{HTML}{EFEFEF}

\usepackage[natbibapa]{apacite}

\ifLuaTeX
\usepackage[bidi=basic,shorthands=off]{babel}
\else
\usepackage[bidi=default,shorthands=off]{babel}
\fi
\title{Adaptive Regularization via Extreme Value Distributions for Gaussian Graphical Models}

\author{
   Alexander P.~Christensen$^{*}$ \orcidlink{0000-0002-9798-7037} \\
  Department of Psychology and Human Development \\
  Vanderbilt University \\
  \And
  Jeongwon~Choi$^{*}$ \orcidlink{0000-0001-6087-2124} \\
  Department of Psychology and Human Development \\
  Vanderbilt University \\
  \And
  Haoyi~Yang$^{*}$ \\
  Department of Statistics \\
  The Pennsylvania State University \\
  \And
  Lingzhou~Xue \orcidlink{0000-0002-8252-0637} \\
  Department of Statistics \\
  The Pennsylvania State University \\
}

\begin{document}

\maketitle

\begin{abstract}
Edge selection in Gaussian graphical models is fundamentally a variable selection problem where pairwise relationships determine construct validity and variable importance in psychological networks. In psychology, network estimation relies predominantly on \(\ell_1\) regularization where uniform shrinkage systematically underestimates edge and centrality parameters. Alternative penalties overcome this bias but rely on fixed hyperparameters that do not adapt to the signal in the data. We develop a family of data-adaptive regularization penalties grounded in extreme value theory. Across 290 empirical psychological datasets, we show that absolute partial correlations are well-described by the Weibull distribution. Using this empirical regularity, we derive Weibull, Gumbel, and Exponential penalties that approximate \(\ell_0\) penalization and calibrate their hyperparameters to each dataset's noise floor. We formally prove the asymptotic properties of their static forms and conduct a large-scale simulation spanning two network topologies and various sample sizes (\(N\) = 100--10,000), demonstrating that their adaptive forms maintain high specificity while accumulating sensitivity as sample size increases with low parameter bias and high rank-order centrality congruence relative to field standards. Empirically, method choice alone determined centrality rankings at sample sizes typical in psychology. Of the three adaptive penalties, Weibull is recommended given the interpretability of its parameters.
\end{abstract}

\keywords{Gaussian graphical models \and regularization \and extreme value theory}

\let\thefootnote\relax\footnotetext{$^{*}$These authors contributed equally to this work.

Correspondence concerning this article should be addressed to Alexander P. Christensen, 230 Appleton Place \#552, Nashville, TN, 37203; \href{mailto:alexander.christensen@vanderbilt.edu}{\color{blue}alexander.christensen@vanderbilt.edu} or Lingzhou Xue, 318 Thomas Building, University Park, PA, 16802; \href{mailto:lzxue@psu.edu}{\color{blue}lzxue@psu.edu}.

Acknowledgments: The authors did not preregister the study. We would like to express enormous gratitude to {Huth, Haslbeck, Keetelaar, van Holst, \& Marsman} (\citeyear{huth2025statistical}) for their compilation of 293 empirical network datasets and the researchers who shared their datasets. All data, code, and materials can be found on the \href{https://osf.io/6hxkt}{\color{blue}Open Science Framework} \citep{christensen2026adaptive_osf}. The first draft of the manuscript was written completely without AI assistance. Subsequent revisions used Claude Sonnet 4.6 \citep{anthropic} to provide feedback on proper grammar, ensure correct formula notation, aid in the creation of LaTeX tables, and evaluate portions of text that could be improved for clarity.

The authors made the following contributions. Alexander P. Christensen: Conceptualization, Data Curation, Formal Analysis, Methodology, Software, Validation, Visualization, Writing - Original Draft Preparation, Writing - Review \& Editing; Jeongwon Choi: Conceptualization, Methodology, Software, Writing - Review \& Editing; Haoyi Yang: Validation, Writing - Original Draft Preparation, Writing - Review \& Editing; Lingzhou Xue: Supervision, Validation, Writing - Review \& Editing.}

\newpage

The Gaussian graphical model (GGM; \citealp{lauritzen1996graphical}) is commonly used to define conditional (in)dependence between variables in networks by setting elements of the precision matrix exactly to zero \citep{epskamp2018tutorial}. These conditional relations, referred to as edges, are commonly used to assess construct validity and identify important variables \citep{chambon2026network,van2024measurement}. Construct validity and variable importance (centrality) therefore hinge on whether the estimated network structure (presence and absence of edges) accurately captures the ``true'' network structure. Consequently, edge selection is fundamentally a variable selection problem where the pairwise relationships determine which variables are important.

In psychological networks, network estimation predominantly relies on \(\ell_1\) regularization or least absolute shrinkage and selection operator (LASSO; \citealp{tibshirani1996regression}). This penalty underlies widely used procedures in psychology such as the graphical LASSO (GLASSO; \citealp{friedman2008sparse}) with extended Bayesian information criterion (EBIC) model selection (EBICglasso; \citealp{foygel2010ebic,epskamp2018tutorial}) with extensions to binary \citep{hofling2009estimation,xue2012nonconcave,van2014new,tao2024additive} and ordinal graphical models \citep{guo2015graphical,lee2022estimating}. Despite its popularity, \(\ell_1\) regularization has some limitations. The \(\ell_1\) penalty shrinks all parameters uniformly toward zero. Several studies have demonstrated that this persistent bias leads to greater error in edge weights and produces poor specificity \citep{williams2019nonregularized,wysocki2019penalty}. Non-convex penalties that reduce \(\ell_1\) shrinkage for larger parameters and \(\ell_0\) approximating penalties that target best subset selection often recover networks better than \(\ell_1\) \citep{williams2020beyond}. These alternative penalties, however, rely on fixed hyperparameters that implicitly assume a universal noise floor regardless of the signal in the underlying partial correlation distribution.

The present paper addresses this limitation by developing data-adaptive \(\ell_0\) approximating penalties grounded in extreme value theory (EVT) for GGMs. Rather than fixing hyperparameters a priori, these penalties calibrate to the empirical partial correlation distribution of each dataset, adapting the noise floor to the signal present in the data. We first demonstrate that absolute empirical partial correlations tend to follow a Weibull distribution \citep{christensen2026network}, one of the canonical EVT distributions \citep{fisher1928limiting,gnedenko1943distribution}, then use this regularity to derive the penalty family. We evaluate these penalties against contemporary field standards, EBICglasso and Bayesian GGMs (BGGM; \citealp{huth2024simplifying,williams2021bayesian}), and several non-convex and \(\ell_0\) approximating penalties \citep{williams2020beyond} across simulated networks varying in topology and sample size.

\section{Introduction}

\subsection{Distribution of Empirical Partial Correlations}\label{distribution-of-empirical-partial-correlations}

In a reanalysis of 293 psychological datasets, Huth et al. (\citeyear{huth2025statistical}) found that most edges in psychometric networks had only weak or inconclusive evidential support, with fewer than one-fifth showing strong evidence. This finding raises concerns about the trustworthiness of psychometric network results, but their diverse database also provides an unprecedented opportunity to identify regularities in empirical partial correlation distributions across psychological constructs.

Christensen and Choi (\citeyear{christensen2026network}) reanalyzed their database in search of such regularities, finding that the Weibull distribution best fit the EBICglasso estimated edge weights of these datasets. This analysis, however, examined post-regularization edge weights, already subjected to model selection and shrinkage, rather than the raw partial correlations. Identifying a distribution that underlies the raw partial correlations is theoretically more appropriate because it characterizes the signal prior to any modeling decisions. We therefore extended this analysis to the absolute partial correlations directly and found that the Weibull distribution similarly described the raw partial correlations (see Appendix \ref{appendix-distribution}).

One convenience of the Weibull distribution is that its parameters can be directly connected to empirical statistics of the partial correlations. First, we define the Weibull probability density (PDF) and cumulative distribution function (CDF) functions, respectively:

\[
f(x; k; \gamma) = \frac{k}{\gamma} \bigg( \frac{x}{\gamma} \bigg)^{k - 1} e^{-\big(\frac{x}{\gamma}\big)^k}
\]

\noindent and

\[
F(x; k; \gamma) = 1 - e^{-\big(\frac{x}{\gamma}\big)^k}
\]

\noindent where \(x\) is a vector of values between \([0, \infty]\), \(k\) is the shape parameter and \(\gamma\) is the scale parameter. These definitions represent the two-parameter Weibull distribution, setting the less frequently used third parameter of location to zero (\(\theta = 0\); \citealp{rinne2008weibull}).

The parameters \(\gamma\) and \(k\) can be connected to empirical statistics, which can be used as method of moments estimators \citep{cohen1965maximum}. The mean of the absolute partial correlations, \(\bar{x}\), approximates the Weibull distribution mean: \(\bar{x} \approx \gamma \Gamma \big( 1 + \frac{1}{k} \big)\). When \(k = 1\), the Weibull mean reduces to \(\gamma\) exactly, so the empirical mean directly estimates the scale parameter. The deviation from \(k = 1\) determines the degree to which \(\bar{x}\) diverges from \(\gamma\), with larger deviations in either direction producing greater discrepancy.

The shape parameter, \(k\), can be estimated independently of \(\gamma\) through the coefficient of variation (CV). As Cohen (\citeyear{cohen1965maximum}) showed, the Weibull CV is a function of \(k\) alone,

\[
\frac{\sigma_x}{\bar{x}} \approx \frac{ \bigg[ \Gamma \big(1 + \frac{2}{k}\big) - \bigg( \Gamma \big( 1 + \frac{1}{k} \big) \bigg)^2 \bigg]}{ \Gamma \big(1 + \frac{1}{k} \big)},
\]

\noindent because \(\gamma\) cancels out in the division of the standard deviation (numerator) over the mean (denominator). The inverse of the empirical CV, the signal-to-noise ratio (\(SNR = \frac{1}{CV} = \frac{\bar{x}}{\sigma_x}\)), therefore serves as a moment estimator of \(k\). This direct correspondence between the Weibull parameters and empirical statistics of partial correlations provides a foundation to develop regularization penalties that adapt to the signal in the empirical distribution.

\subsection{Extreme Value Distributions as Regularization Penalties}\label{extreme-value-distributions-as-regularization-penalties}

The Weibull distribution belongs to the broader family of extreme value distributions, which arise from the Fisher-Tippett-Gnedenko theorem representing the extreme value analog of the Central Limit Theorem \citep{fisher1928limiting,gnedenko1943distribution}. This theorem establishes that the limiting distribution of extreme order statistics must belong to one of three families: Gumbel (Type I), Fréchet (Type II), or Weibull (Type III). Their minima form applies to absolute partial correlations as they are bounded below at zero with mass often concentrated near the boundary due to the partialliing of shared covariance. The Weibull distribution unifies the EVT family, connecting each distribution through transformations made explicit in Appendix \ref{appendix-evt} and bridging to the exponential distribution's established use as a regularization penalty for linear regression \citep{breheny2015group,vanderwerken2011variable,wang2018variable}.

\subsubsection{Formal Penalties}\label{formal-penalties}

Regularization penalties can be described in a general form of the penalized GGM log-likelihood:

\[
\hat{\mathbf{K}} = \arg \max_{\mathbf{K} \in \mathcal{S}_{++}^p} 
\left[ \log \det(\mathbf{K}) - \mathrm{tr}(\mathbf{S}\mathbf{K}) - 
\sum_{i < j} p_{\lambda_{ij}}(|k_{ij}|) \right],
\]

\noindent where \(\mathbf{K} = \boldsymbol{\Sigma}^{-1}\) is the \(p \times p\) precision matrix, \(\mathbf{S}\) is the \(p \times p\) sample covariance matrix, \(k_{ij}\) is the \((i, j)\)-element of \(\mathbf{K}\), \(p_{\lambda_{ij}}(\cdot)\) is a penalty function, and \(\lambda_{ij} > 0\) is an element-wise regularization parameter that controls the degree of penalization. The penalty is applied only to off-diagonal elements to preserve positive definiteness of \(\mathbf{K}\).

The EXP (Exponential) penalty, as defined by Wang et al. (\citeyear{wang2018variable}), takes the form:

\[
p(x;\, \lambda, \gamma) = \lambda  \cdot \bigg( 1 - e^{\frac{|x|}{\gamma}} \bigg),
\]

\noindent where \(\gamma\) represents a tunable parameter. When \(\gamma\) is small, the penalty approximates the \(\ell_0\)-norm: \(p(|\theta|) \approx \lambda I (\theta \neq 0)\). The derivative of the penalty takes the from:

\[
p^\prime(x;\, \lambda, \gamma) = \lambda  \cdot \frac{1}{\gamma} e^{-\frac{|x|}{\gamma}},
\]

\noindent which makes the derivative decay exponentially and approach zero asymptotically, but never actually reaching it.

For the Weibull and Gumbel distributions to be theoretically capable of yielding oracle properties, their penalty functions must satisfy three conditions \citep{fan2009network,fan2001variable}.

\begin{enumerate}
\def\labelenumi{\arabic{enumi}.}
\item
  \textit{Concavity on $(0, \infty)$}: the rate of penalization decreases as edge weight magnitude increases, which can be formally expressed as \(p_\lambda^{\prime \prime}(|k_{ij}|) \leq 0\) for \(|k_{ij}| > 0\).
\item
  \textit{Continuous differentiability on $(0, \infty)$}: permits use of the local linear approximation algorithm (LLA) for optimization \citep{fan2009network,zou2008one} and satisfies the regularity conditions required for oracle property proofs \citep{fan2001variable,wang2018variable}.
\item
  \textit{Vanishing derivative}: \(p_\lambda^\prime(|k_{ij}|) \to 0\) as \(|k_{ij}| \to \infty\), ensuring that large edges are penalized negligibly and estimated with vanishing bias, satisfying the unbiasedness condition of Fan and Li (\citeyear{fan2001variable}).
\end{enumerate}

\noindent We formally demonstrate the these three penalties can satisfy all three criteria but the Weibull and Gumbel distributions first require some modifications.

The Weibull distribution satisfies all three conditions when \(k \le 1\) but requires an adjustment when \(k > 1\). When \(k > 1\), the mode of the Weibull distribution is no longer zero and instead takes the value of \(x^* = \gamma \bigg( \frac{k - 1}{k} \bigg)^{\frac{1}{k}}\). The consequence is that the concavity condition is violated because penalization is most aggressive at the mode and monotonically decreasing on either side of it, leading to less aggressive penalization on edges smaller than the mode and undermining the sparsity property.

To address this issue, the Weibull penalty can be defined piecewise (similar to other non-convex penalties; \citealp{zhang2010nearly,fan2001variable}) to ensure concavity for all \(k > 0\). The Weibull PDF, used as the penalty derivative, is capped at its modal value for \(|x| \leq x^*\):

\[
p^{\prime}(x;\, \lambda, \gamma, k) = \lambda \cdot \frac{k}{\gamma} \times
\begin{cases}
\left(\dfrac{x^*}{\gamma}\right)^{k-1} e^{-(x^*/\gamma)^k} & 
\text{if } k > 1 \text{ and } |x| \leq x^* \\[10pt]
\left(\dfrac{|x|}{\gamma}\right)^{k-1} e^{-(|x|/\gamma)^k} & 
\text{otherwise}
\end{cases}.
\]

\noindent The term, \(\frac{k}{\gamma}\left(\frac{x^*}{\gamma}\right)^{k-1}e^{-(x^*/\gamma)^k}\), is denoted \(d^*\), the value of the derivative at the mode, \(x^*\). The resulting derivative is monotonically non-increasing for all \(k > 0\), restoring the concavity condition. The corresponding penalty, given by the Weibull CDF, follows by integration:

\[
p(x;\, \lambda, \gamma, k) = \lambda \times
\begin{cases}
d^* \cdot |x| & 
\text{if } k > 1 \text{ and } |x| \leq x^* \\[10pt]
1 - e^{-(|x|/\gamma)^k} + \delta & 
\text{otherwise}
\end{cases},
\]

\noindent where \(\delta = d^* \cdot x^* - \left(1 - e^{-(x^*/\gamma)^k}\right)\) is a continuity shift ensuring the two pieces meet at \(x^*\). The resulting penalty is concave and continuous for all \(k > 0\).

The standard Gumbel distribution is defined on \((-\infty, \infty)\) and peaks at its location parameter (which is set to zero), meaning its derivative does not vanish as \(|x| \to \infty\) in the negative direction and penalization is applied to negative values. Folding the Gumbel distribution resolves this issue by reflecting the distribution about zero, yielding a symmetric penalty bounded on \([0, \infty)\) that satisfies all three conditions. The folded Gumbel penalty takes the form \citep{nadarajah2015new}:

\[
p(x;\, \lambda, \gamma) = 
\lambda \cdot \bigg[ \bigg( 1 - e^{-e^{\frac{|x|}{\gamma}}} \bigg) - \bigg( 1 - e^{-e^{-\frac{|x|}{\gamma}}} \bigg) \bigg] = 
\lambda \cdot \bigg( e^{-e^{-\frac{|x|}{\gamma}}} - e^{-e^{\frac{|x|}{\gamma}}} \bigg),
\]

\noindent where the left-hand side makes explicit the minima Gumbel motivation and the right-hand side is the simplified form, which is identical to the folded maxima Gumbel. The derivative is equivalently expressed as the folded maxima Gumbel PDF for simplicity:

\[
p^\prime(x;\, \lambda, \gamma) = \frac{\lambda}{\gamma} \left( e^{-\frac{|x|}{\gamma} - e^{-\frac{|x|}{\gamma}}} + e^{\frac{|x|}{\gamma} - e^{\frac{|x|}{\gamma}}} \right).
\]

\noindent With these modifications, both the Weibull and Gumbel penalties satisfy the three conditions that are necessary to achieve oracle properties as a regularization penalty. 

\subsubsection{Oracle Properties}

In this section, we present the theorems and corollaries establishing the oracle properties of these penalties with their formal proofs provided in Appendix \ref{appendix-proof}.

Recall that $\mathbf X_1,\ldots,\mathbf X_n$ are independent
and identically distributed random vectors from $N_p(\mathbf0,\boldsymbol\Sigma_0)$. Define the population
precision matrix as $\mathbf K_0=(k_{ij}^0)_{p\times p}=\boldsymbol\Sigma_0^{-1}$, and denote the true edge set by 
\[
\mathcal A
=\{(i,j):1\leq i<j\leq p,\ k_{ij}^0\neq0\}
.
\]
with cardinality $ s=|\mathcal A|$. Let  $\mathcal A^c$ represent the remaining off-diagonal pairs. Let $\boldsymbol\Omega_0$ and $\widehat{\boldsymbol\Omega}$ denote the
population and estimated partial correlation matrices, respectively, where for any $i \ne j$,
\[
\omega_{ij}^0=-\frac{k_{ij}^0}{\sqrt{k_{ii}^0k_{jj}^0}},
\quad \mathrm{and } \quad
\widehat\omega_{ij}
=-\frac{\widehat k_{ij}}{\sqrt{\widehat k_{ii}\widehat k_{jj}}}, 
\]
where $\omega_{ii}^0 = \widehat{\omega}_{ii} = 1$, and $\widehat{\mathbf{K}} = (\widehat{k}_{ij})_{p\times p}$ is a generic estimator of $\mathbf{K}_0$.

We impose the following assumption on the precision matrix: 
\begin{enumerate}
\item[(i)] There exist constants $0<c<C<\infty$ such that the eigenvalues of $\mathbf K_0$ satisfy $c \le \lambda_{\min}(\mathbf{K}_0) \le \lambda_{\max}(\mathbf{K}_0) \le C$.
\end{enumerate}
Condition (i) is a standard regularity requirement in high-dimensional covariance and precision matrix estimation, ensuring that $\boldsymbol{\Sigma}_0$ and $\mathbf{K}_0$ are well-conditioned \citep{rothman2008sparse,ravikumar2011high,xue2012positive,xue2013minimax}.

Let $k_{\min}=\min_{(i,j)\in\mathcal A}|k_{ij}^0|$, $\tau_n=\sqrt{\frac{\log p}{n}}$, and $r_n=\sqrt{\frac{s\log p}{n}}$. Next, we present the assumptions on the penalty function:
\begin{enumerate}
\item[(ii)] For every $i<j$, $p_{\lambda_{ij}}(0)=0$ and
$p_{\lambda_{ij}}$ is nondecreasing and concave on $[0,\infty)$, continuous
at zero, and continuously differentiable on $(0,\infty)$. 

\item[(iii)] As $n\to\infty$, the penalty derivative satisfies
\begin{equation}
 \frac{\eta_n}{\tau_n}= \frac{1}{\tau_n}\max_{(i,j)\in\mathcal A}
 \sup_{t\geq k_{\min}/2}p_{\lambda_{ij}}^{\prime}(t) \to 0,
\label{eq:first-order-bias}
\end{equation}
and, for every fixed constant $M>0$,
\begin{equation}
\frac{1}{\tau_n}
\min_{(i,j)\in\mathcal A^c}
\inf_{0<t\leq Mr_n}
p_{\lambda_{ij}}^{\prime}(t)
\to +\infty.
\label{eq:null-derivative-common}
\end{equation}
\end{enumerate}

Condition (ii) encompasses a broad class of penalty functions, including the EXP, modified Weibull, and folded Gumbel penalty functions.  In Condition (iii), \eqref{eq:first-order-bias} asymptotically eliminates the first-order shrinkage bias at the true edges, and \eqref{eq:null-derivative-common} ensures that the penalty derivative near the origin dominates the null-edge scores, which have stochastic order $O_{p}(\tau_n)$. 

Let $\|\cdot\|_F$ denote the Frobenius norm. The following theorem establishes the convergence rate of the proposed precision and partial correlation matrix estimators.

\begin{theorem}[Convergence rates]
\label{thm:high-dimensional-rate}
Suppose that 
\(
{s\log p}=o(n)
\),  Assumptions (i)--(iii) hold, and \(\frac{1}{r_n}k_{min}\to\infty\) .
Then, the proposed penalized log-likelihood method has a sequence of local maximizers
$\widehat{\mathbf K}$ satisfying that
\begin{equation}
\|\widehat{\mathbf K}-\mathbf K_0\|_F
=O_p\!\left(\sqrt{\frac{s\log p}{n}}\right).
\label{eq:high-dimensional-rate}
\end{equation}
Moreover, the associated partial correlation matrix estimator $\widehat{\boldsymbol\Omega}$ satisfies
\begin{equation}
\|\widehat{\boldsymbol\Omega}-\boldsymbol\Omega_0\|_F
=O_p\!\left(\sqrt{\frac{s\log p}{n}}\right).
\label{eq:partial-correlation-rate}
\end{equation}
\end{theorem}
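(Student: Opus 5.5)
We follow the Rothman et al. (2008) / Lam and Fan (2009) strategy: show that the penalized objective is strictly smaller on the boundary of a Frobenius ball around \(\mathbf K_0\) than at its center. Write
\[
Q(\mathbf K)=\log\det(\mathbf K)-\mathrm{tr}(\mathbf S\mathbf K)-\sum_{i<j}p_{\lambda_{ij}}(|k_{ij}|),
\]
and set \(\mathbf K=\mathbf K_0+\boldsymbol\Delta\) with \(\boldsymbol\Delta\) symmetric and \(\|\boldsymbol\Delta\|_F=Mr_n\) for a large constant \(M\). If \(G(\boldsymbol\Delta)=Q(\mathbf K_0+\boldsymbol\Delta)-Q(\mathbf K_0)<0\) on this sphere with probability tending to one, then \(Q\) has a local maximizer inside the ball, which gives \eqref{eq:high-dimensional-rate}. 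Because \(r_n\to0\) (from \(s\log p=o(n)\)) and \(\lambda_{\min}(\mathbf K_0)\ge c\), every point of the ball is positive definite for large \(n\), so the ball lies inside \(\mathcal S_{++}^p\).

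We split \(G\) into a likelihood part and a penalty part.

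\emph{Likelihood part.} Taylor-expanding \(\log\det\) with integral remainder gives
\[
\log\det(\mathbf K_0+\boldsymbol\Delta)-\log\det\mathbf K_0-\mathrm{tr}(\mathbf S\boldsymbol\Delta)
=\mathrm{tr}\bigl((\boldsymbol\Sigma_0-\mathbf S)\boldsymbol\Delta\bigr)-\tilde{\boldsymbol\Delta}^\top\Bigl[\int_0^1(1-v)(\mathbf K_v^{-1}\otimes\mathbf K_v^{-1})\,dv\Bigr]\tilde{\boldsymbol\Delta},
\]
where \(\tilde{\boldsymbol\Delta}=\mathrm{vec}(\boldsymbol\Delta)\) and \(\mathbf K_v=\mathbf K_0+v\boldsymbol\Delta\). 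Since \(\lambda_{\max}(\mathbf K_v)\le 2C\) on the ball for large \(n\), the quadratic term is at most \(-\tfrac{1}{8C^2}\|\boldsymbol\Delta\|_F^2\).

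The linear term splits into diagonal, \(\mathcal A\), and \(\mathcal A^c\) parts. We use the Gaussian concentration bound \(\max_{i,j}|s_{ij}-\sigma^0_{ij}|=O_p(\tau_n)\) from Bickel and Levina and Ravikumar et al.
\begin{itemize}
\item The diagonal part is bounded by \(O_p(\tau_n)\sqrt p\,\|\boldsymbol\Delta^{D}\|_F\), and the \(\mathcal A\) part by \(O_p(\tau_n)\sqrt s\,\|\boldsymbol\Delta\|_F\). Both are \(O_p(r_n)\|\boldsymbol\Delta\|_F\), provided \(p\lesssim s\) or the diagonal is handled the same way as in Lam and Fan. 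We would simply adopt the convention \(s\gtrsim p\), counting the diagonal in the sparsity.
\item The \(\mathcal A^c\) part is at most \(O_p(\tau_n)\sum_{\mathcal A^c}|\Delta_{ij}|\). This term is \emph{not} controlled by Cauchy–Schwarz and must be dominated by the penalty.
\end{itemize}

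\emph{Penalty part.} The penalty contributes \(-\sum_{i<j}\bigl[p_{\lambda_{ij}}(|k^0_{ij}+\Delta_{ij}|)-p_{\lambda_{ij}}(|k^0_{ij}|)\bigr]\).
\begin{itemize}
\item For \((i,j)\in\mathcal A^c\), \(k_{ij}^0=0\). By concavity and \(|\Delta_{ij}|\le Mr_n\), the mean value theorem gives \(p_{\lambda_{ij}}(|\Delta_{ij}|)\ge |\Delta_{ij}|\inf_{0<t\le Mr_n}p'_{\lambda_{ij}}(t)\). By \eqref{eq:null-derivative-common}, that infimum exceeds \(L\tau_n\) for any fixed \(L\) eventually. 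So this contribution is \(\le -L\tau_n\sum_{\mathcal A^c}|\Delta_{ij}|\), which absorbs the \(O_p(\tau_n)\sum_{\mathcal A^c}|\Delta_{ij}|\) term.
\item For \((i,j)\in\mathcal A\), \(r_n/k_{\min}\to0\) means \(|k^0_{ij}+v\Delta_{ij}|\ge k_{\min}/2\) for all \(v\in[0,1]\). Hence the mean value theorem bounds the difference by \(\eta_n|\Delta_{ij}|\), and the sum by \(\eta_n\sqrt s\|\boldsymbol\Delta\|_F=o(\tau_n)\sqrt s\,\|\boldsymbol\Delta\|_F=o(r_n)\|\boldsymbol\Delta\|_F\) by \eqref{eq:first-order-bias}.
\end{itemize}

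Collecting terms,
\[
G\le O_p(r_n)\|\boldsymbol\Delta\|_F-\tfrac{1}{8C^2}\|\boldsymbol\Delta\|_F^2=\bigl[O_p(1)M-M^2/(8C^2)\bigr]r_n^2,
\]
which is negative for \(M\) large.

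\emph{Partial correlations.} For \eqref{eq:partial-correlation-rate}, write
\[
\widehat\omega_{ij}-\omega^0_{ij}=-\frac{\widehat k_{ij}-k^0_{ij}}{\sqrt{\widehat k_{ii}\widehat k_{jj}}}-k^0_{ij}\Bigl(\frac{1}{\sqrt{\widehat k_{ii}\widehat k_{jj}}}-\frac{1}{\sqrt{k^0_{ii}k^0_{jj}}}\Bigr).
\]
Diagonal entries of \(\mathbf K_0\) lie in \([c,C]\). Since \(\max_i|\widehat k_{ii}-k^0_{ii}|\le\|\widehat{\mathbf K}-\mathbf K_0\|_F\to0\), the \(\widehat k_{ii}\) are bounded away from zero with probability tending to one. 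Therefore the first term contributes \(O(\|\widehat{\mathbf K}-\mathbf K_0\|_F)\) in Frobenius norm. The second term, in matrix form, is \(\mathbf K_0\) multiplied on both sides by diagonal matrices of size \(O(\max_i|\widehat k_{ii}-k^0_{ii}|)\), plus a cross term. Its Frobenius norm is bounded by \(\|\mathbf K_0\|_2\,\|\mathbf D\|_F\) with \(\|\mathbf D\|_F\lesssim\|\widehat{\mathbf K}-\mathbf K_0\|_F\). This yields \eqref{eq:partial-correlation-rate}.

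\textbf{Main obstacle.} The main difficulty is the linear stochastic term, specifically the diagonal \(\sqrt p\) factor and the null-edge \(\ell_1\) term. The first fix to try is the one used above: let the divergent penalty derivative from \eqref{eq:null-derivative-common} dominate \(\tau_n\sum_{\mathcal A^c}|\Delta_{ij}|\), and bound the diagonal contribution by the \(s\) count.
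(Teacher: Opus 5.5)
\textbf{There is a genuine gap, in the diagonal step.} The rest of your argument follows the paper's proof. The sphere $\|\boldsymbol\Delta\|_F=Mr_n$, the curvature bound from the integral remainder of $\log\det$, and the absorption of the null-edge linear term all match. So does the active-penalty bound $\eta_n\sqrt{s}\,\|\boldsymbol\Delta\|_F=o(r_n)\|\boldsymbol\Delta\|_F$. Your partial-correlation conversion is correct and more explicit than the paper's one-line appeal to condition (i).

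With $\mathbf S$ in the objective, the diagonal term $\sum_i(\sigma^0_{ii}-s_{ii})\Delta_{ii}$ is only $O_p(\sqrt{p/n})\,\|\boldsymbol\Delta\|_F$. The ``convention $s\gtrsim p$'' is not available to you, because $s=|\mathcal A|$ counts off-diagonal pairs only. What your argument proves is therefore the weaker rate $\sqrt{(s+p)\log p/n}$, under the stronger requirement $(s+p)\log p=o(n)$.

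No sharper estimate rescues the $\mathbf S$ version, because the unpenalized diagonal of $\mathbf K$ must itself be estimated. At a node that is isolated in $\widehat{\mathbf K}$, stationarity in $k_{ii}$ forces $\widehat k_{ii}=1/s_{ii}$. So if the estimate has no false edges, the at least $p-2s$ isolated nodes alone contribute Frobenius error of order $\sqrt{p/n}$. That exceeds $\sqrt{s\log p/n}$ whenever $s\log p\ll p$.

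\textbf{The missing idea is marginal standardization.} This is how the paper's proof proceeds, and it is the device Rothman et al.\ use for their correlation-based estimator. The main text writes the objective with $\mathbf S$, but the proof applies it to standardized data: take $(\boldsymbol\Sigma_0)_{ii}=1$ and replace $\mathbf S$ by $\widehat{\mathbf R}=\widehat{\mathbf D}^{-1}\mathbf S\widehat{\mathbf D}^{-1}$. Then $\mathbf E=\widehat{\mathbf R}-\boldsymbol\Sigma_0$ satisfies $E_{ii}=0$ identically. You still have $\|\mathbf E\|_{\max}=O_p(\tau_n)$, because $\max_i|S_{ii}-1|=o_p(1)$.

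The diagonal perturbations $\Delta_{ii}$ then drop out of the linear term and appear only in the positive curvature term. The stochastic term reduces to the $\mathcal A$ part, $O_p(\tau_n)\sqrt s\,\|\boldsymbol\Delta\|_F=O_p(Mr_n^2)$, plus the $\mathcal A^c$ part you already absorb. This yields $\sqrt{s\log p/n}$ with $s$ counting edges only.

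Under this reading $\mathbf K_0$ is the inverse correlation matrix. Partial correlations are invariant to rescaling the variables, so \eqref{eq:partial-correlation-rate} concerns the same $\boldsymbol\Omega_0$, and your conversion argument finishes the proof.
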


Theorem~\ref{thm:high-dimensional-rate} accommodates the high-dimensional setting where both $p$ and $s$ diverge as $n$ grows. The resulting convergence rate in Theorem~\ref{thm:high-dimensional-rate} matches the classical result for Gaussian graphical models (see Corollary~1 of \citealp{rothman2008sparse}) and Gaussian copula graphical models (see Corollary~1 of \citealp{xue2012regularized}).

We next establish the oracle property of the proposed estimator. Define the oracle parameter space as $$\mathcal M_{\mathcal A}
=
\{\mathbf K\succ0:k_{ij}=0
\text{ whenever }(i,j)\notin\mathcal A,\ i<j\}.$$
For any $\mathbf K\in\mathcal M_{\mathcal A}$, let
$\boldsymbol\vartheta_{\mathcal A}(\mathbf K)
=
\left(
k_{11},\ldots,k_{pp},
(k_{ij})_{(i,j)\in\mathcal A}
\right)^{\mathsf T}$ collect the non-zero elements, and write
$\boldsymbol\vartheta_{\mathcal A,0}
=
\boldsymbol\vartheta_{\mathcal A}(\mathbf K_0)$.
Let $\mathbf K(\boldsymbol\vartheta_{\mathcal A})$ denote the symmetric parameterized by $\boldsymbol\vartheta_{\mathcal A}$. Under this oracle model, the density of a single observation $\mathbf x\in\mathbb{R}^p$ is
\[
f_{\mathcal A}(\mathbf x;\boldsymbol\vartheta_{\mathcal A})
=
(2\pi)^{-p/2}
\det\{\mathbf K(\boldsymbol\vartheta_{\mathcal A})\}^{1/2}
\exp\!\left\{
-\frac12\mathbf x^{\mathsf T}
\mathbf K(\boldsymbol\vartheta_{\mathcal A})\mathbf x
\right\}.
\]
The corresponding score vector and per-observation Fisher information matrix are defined as
\[
\boldsymbol S_{\mathcal A}(\mathbf X)
=
\left.
\frac{\partial}
{\partial\boldsymbol\vartheta_{\mathcal A}}
\log f_{\mathcal A}
(\mathbf X;\boldsymbol\vartheta_{\mathcal A})
\right|_{\boldsymbol\vartheta_{\mathcal A}
=\boldsymbol\vartheta_{\mathcal A,0}},
\quad \mathrm{and } \quad
\mathcal J_{\mathcal A,n}
=
\mathbb{E}_0\!\left[
\boldsymbol S_{\mathcal A}(\mathbf X)
\boldsymbol S_{\mathcal A}(\mathbf X)^{\mathsf T}
\right],
\]
where  $\mathbb{E}_0$ denotes the expectation under the true distribution. Let $\boldsymbol g_{\mathcal A}(\boldsymbol\vartheta_{\mathcal A})
=
\left(
-\frac{k_{ij}}{\sqrt{k_{ii}k_{jj}}}
\right)_{(i,j)\in\mathcal A}$ collect the active partial correlations, and let $\mathbf G_{\mathcal A,n}
=
\left.
\frac{\partial
\boldsymbol g_{\mathcal A}(\boldsymbol\vartheta_{\mathcal A})}
{\partial\boldsymbol\vartheta_{\mathcal A}^{\mathsf T}}
\right|_{\boldsymbol\vartheta_{\mathcal A}
=\boldsymbol\vartheta_{\mathcal A,0}}$ denote its Jacobian matrix evaluated at $\boldsymbol\vartheta_{\mathcal A,0}$. Assuming the limit exists, the asymptotic oracle  information matrix
$\mathcal I_{\mathcal A}^{\,o}$ is defined by
\begin{equation}
(\mathcal I_{\mathcal A}^{\,o})^{-1}
=
\lim_{n\to\infty}
\mathbf G_{\mathcal A,n}
\mathcal J_{\mathcal A,n}^{-1}
\mathbf G_{\mathcal A,n}^{\mathsf T}.
\label{eq:oracle-information}
\end{equation}

\begin{theorem}[Oracle property]
\label{thm:oracle-fixed-p}
Suppose that $p$ and $s$ are fixed,
 Assumptions (i)--(iii) hold, $\sqrt{n}k_{\min}\to+\infty$, and the limiting matrix $(\mathcal I_{\mathcal A}^{\,o})^{-1}$ in \eqref{eq:oracle-information} exists and is positive definite. Then 
as $n\to\infty$, there exists a sequence of local maximizers
$\widehat{\mathbf K}$ of the proposed penalized log-likelihood method that achieves the graphical model selection consistency:
\begin{equation}
P\!\left(\{(i,j):i<j,\ \widehat k_{ij}\neq0\}=\mathcal A\right)
\to 1.
\label{eq:oracle-selection}
\end{equation} 
Moreover, as $n\to\infty$, we have the following asymptotic normality results:
\begin{equation}
\sqrt n\left\{
(\widehat\omega_{ij})_{(i,j)\in\mathcal A}
-(\omega_{ij}^0)_{(i,j)\in\mathcal A}
\right\}
\longrightarrow
N_s\!\left(\mathbf0,(\mathcal I_{\mathcal A}^{\,o})^{-1}\right).
\label{eq:oracle-asymptotic-normality}
\end{equation}
\end{theorem}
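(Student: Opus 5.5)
The plan follows the classical Fan--Li route, specialized to fixed $p$ and $s$. Everything is anchored on the oracle estimator
\[
\widehat{\mathbf K}^{o}=\arg\max_{\mathbf K\in\mathcal M_{\mathcal A}}\big[\log\det\mathbf K-\mathrm{tr}(\mathbf S\mathbf K)\big],
\]
which is the unpenalized MLE over the true support. It is a regular parametric MLE in the finite-dimensional parameter $\boldsymbol\vartheta_{\mathcal A}$. Since the Gaussian model is smooth and $\mathbf K_0$ is well conditioned by Assumption (i), standard theory gives $\sqrt n(\widehat{\boldsymbol\vartheta}^{o}_{\mathcal A}-\boldsymbol\vartheta_{\mathcal A,0})\to N(\mathbf0,\mathcal J_{\mathcal A}^{-1})$. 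If $k_{\min}$ shrinks with $n$, this still holds: the log-likelihood and its information do not depend on $k_{\min}$ beyond (i).

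\textbf{Step 1: a local maximizer inside the oracle space.} I will work with the penalized objective restricted to $\mathcal M_{\mathcal A}$ and apply Theorem~\ref{thm:high-dimensional-rate}'s argument there (or simply the fixed-$p$ version). On the sphere $\boldsymbol\vartheta_{\mathcal A}=\boldsymbol\vartheta_{\mathcal A,0}+n^{-1/2}\mathbf u$ with $\|\mathbf u\|=M$, expand the log-likelihood. The linear term is $O_p(M)$ and the quadratic term is $\le -cM^2$. The penalty difference at the active coordinates is bounded by the mean value theorem by $s\,\eta_n\,Mn^{-1/2}$. This holds because $\sqrt n k_{\min}\to\infty$ keeps every perturbed $|k_{ij}|\ge k_{\min}/2$, and then \eqref{eq:first-order-bias} with $\tau_n\asymp n^{-1/2}$ gives $\sqrt n\,\eta_n\to0$. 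Hence there is a restricted local maximizer $\widehat{\mathbf K}$ with $\|\widehat{\boldsymbol\vartheta}_{\mathcal A}-\widehat{\boldsymbol\vartheta}^{o}_{\mathcal A}\|=o_p(n^{-1/2})$. This follows because the restricted KKT equation differs from the oracle score equation only by the penalty-gradient term of size $O(\eta_n)=o(n^{-1/2})$, and the Hessian is invertible.

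\textbf{Step 2: this point is a local maximizer of the full problem.} This is the main obstacle, because of sparsity. For any $\mathbf K$ in a small neighborhood $\|\mathbf K-\widehat{\mathbf K}\|_F\le Mr_n$ with nonzero null entries, I will compare it with its projection onto $\mathcal M_{\mathcal A}$, obtained by zeroing the $\mathcal A^c$ entries. For each $(i,j)\in\mathcal A^c$, the partial derivative of the unpenalized likelihood at such a point is
\[
2\big(\sigma_{ij}-s_{ij}\big)+O(\|\mathbf K-\mathbf K_0\|)=O_p(\tau_n)+O(r_n)=O_p(n^{-1/2}),
\]
using $\sigma_{ij}=(\mathbf K^{-1})_{ij}$, CLT for $\mathbf S$, and Lipschitzness of inversion under (i). By contrast, \eqref{eq:null-derivative-common} says $p'_{\lambda_{ij}}(t)/\tau_n\to\infty$ uniformly for $0<t\le Mr_n$. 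The sign of the penalized directional derivative along each null coordinate therefore equals $-\mathrm{sgn}(k_{ij})$ with probability tending to one. Moving each null entry toward zero thus strictly increases the objective. Concretely, I will integrate along the segment from $\mathbf K$ to its projection, controlling uniformity over the neighborhood with a union bound over the fixed number of pairs. Combined with Step 1, $\widehat{\mathbf K}$ is a local maximizer whose null entries are exactly zero. Its active entries are nonzero w.p.\ $\to1$ because $\sqrt n k_{\min}\to\infty$ dominates the $O_p(n^{-1/2})$ error. This gives \eqref{eq:oracle-selection}.

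\textbf{Step 3: asymptotic normality.} On the selection event, $\widehat{\boldsymbol\vartheta}_{\mathcal A}=\widehat{\boldsymbol\vartheta}^{o}_{\mathcal A}+o_p(n^{-1/2})$. The map $\boldsymbol g_{\mathcal A}$ is smooth near $\boldsymbol\vartheta_{\mathcal A,0}$ since the diagonal entries are bounded below by $c$. The delta method with Jacobian $\mathbf G_{\mathcal A,n}$ then yields
\[
\sqrt n\,\big(\boldsymbol g_{\mathcal A}(\widehat{\boldsymbol\vartheta}_{\mathcal A})-\boldsymbol g_{\mathcal A}(\boldsymbol\vartheta_{\mathcal A,0})\big)\to N_s\big(\mathbf0,\lim\mathbf G_{\mathcal A,n}\mathcal J_{\mathcal A,n}^{-1}\mathbf G_{\mathcal A,n}^{\mathsf T}\big)=N_s\big(\mathbf0,(\mathcal I^{o}_{\mathcal A})^{-1}\big).
\]
If $\mathbf K_0$ varies with $n$, I will justify the CLT by Lindeberg under (i), via a triangular array, and use the assumed existence and positive definiteness of the limit in \eqref{eq:oracle-information}. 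This gives \eqref{eq:oracle-asymptotic-normality}.
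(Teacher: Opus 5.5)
Your proposal is correct and follows essentially the same route as the paper's proof. Both arguments build a restricted local maximizer within $O_p(n^{-1/2})$ of the oracle MLE, then extend it to a full local maximizer by showing that the null-coordinate penalty derivative from \eqref{eq:null-derivative-common} dominates the $O_p(n^{-1/2})$ likelihood score, and finally obtain normality from the score/KKT comparison ($\boldsymbol d_n=O(\eta_n)=o(n^{-1/2})$) plus the delta method. The differences are cosmetic: you center the ball at $\boldsymbol\vartheta_{\mathcal A,0}$ rather than at the oracle MLE, and you compare with the projection onto $\mathcal M_{\mathcal A}$ rather than splitting $\boldsymbol\Delta=\boldsymbol U+\boldsymbol V$. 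In Step 2, shrink the neighborhood so that its projection stays inside the Step 1 ball where restricted maximality holds.
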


Theorem \ref{thm:oracle-fixed-p} shows that the proposed estimator possesses the oracle property in the sense of \citet{fan2001variable}: with probability approaching one, it correctly identifies the true underlying graph structure and estimates the nonzero partial correlations as efficiently as if the true support were known in advance. The condition $\sqrt{n}k_{\min} \to \infty$ is a standard minimal signal strength condition in the literature \citep{fan2009network,fan2014strong,ravikumar2011high,zhang2026copula}.

The following three corollaries provide explicit rate conditions on the tuning parameters under which the EXP, modified Weibull, and folded Gumbel penalties satisfy the oracle property. Throughout, we consider fixed $p,s$ and set a common penalty parameter as $\lambda_{ij}= \lambda_n$.

\begin{corollary}[EXP penalty]
\label{cor:oracle-exp}
Suppose that $p$ and $s$ are fixed,
 Assumption (i) hold, $\sqrt{n}k_{\min}\to+\infty$, and the limiting matrix $(\mathcal I_{\mathcal A}^{\,o})^{-1}$ in \eqref{eq:oracle-information} exists and is positive definite. When the penalty is given by $p(x;\lambda_n,\gamma_n)=\lambda_n\{1-\exp(-|x|/\gamma_n)\}$, if 
\begin{equation}
\lambda_n\to0,\quad\gamma_n\to0,\quad\sqrt n\,\gamma_n\to\infty,\quad
\frac{\sqrt n\,\lambda_n}{\gamma_n}\to\infty,\quad\mathrm{and}\quad
\frac{\sqrt n\,\lambda_n}{\gamma_n}
 e^{-k_{\min}/(2\gamma_n)}\to0,
\label{eq:oracle-exp-tuning}
\end{equation}
then the conclusions of Theorem~\ref{thm:oracle-fixed-p} hold.
\end{corollary}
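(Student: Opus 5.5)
The plan is to reduce Corollary~\ref{cor:oracle-exp} to Theorem~\ref{thm:oracle-fixed-p}. Since $p$ and $s$ are fixed, $\tau_n \asymp n^{-1/2}$ and $r_n \asymp n^{-1/2}$ up to constants, so $1/\tau_n \asymp \sqrt{n}$. The corollary already assumes Assumption~(i), the signal condition $\sqrt n k_{\min}\to\infty$, and the existence and positive definiteness of $(\mathcal I_{\mathcal A}^{\,o})^{-1}$. What remains is to verify Assumptions~(ii) and~(iii) for $p_{\lambda_n}(t)=\lambda_n\{1-e^{-t/\gamma_n}\}$ under the tuning conditions~\eqref{eq:oracle-exp-tuning}.

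\textbf{Step 1: Assumption (ii).} This is routine calculus. We have $p_{\lambda_n}(0)=0$, and on $(0,\infty)$
\[
p'_{\lambda_n}(t)=\frac{\lambda_n}{\gamma_n}e^{-t/\gamma_n}>0,
\qquad
p''_{\lambda_n}(t)=-\frac{\lambda_n}{\gamma_n^2}e^{-t/\gamma_n}<0 .
\]
So the penalty is nondecreasing, concave, continuous at zero, and $C^\infty$ on $(0,\infty)$.

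\textbf{Step 2: first part of (iii), equation~\eqref{eq:first-order-bias}.} Because $p'_{\lambda_n}$ is decreasing, the supremum over $t\ge k_{\min}/2$ is attained at $t=k_{\min}/2$:
\[
\eta_n=\frac{\lambda_n}{\gamma_n}e^{-k_{\min}/(2\gamma_n)} .
\]
Then $\eta_n/\tau_n \asymp \sqrt n\,\lambda_n\gamma_n^{-1}e^{-k_{\min}/(2\gamma_n)}$, which tends to zero by the last condition in~\eqref{eq:oracle-exp-tuning}.

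\textbf{Step 3: second part of (iii), equation~\eqref{eq:null-derivative-common}.} Fix $M>0$. By monotonicity, the infimum over $0<t\le Mr_n$ is attained at $t=Mr_n$:
\[
\inf_{0<t\le Mr_n}p'_{\lambda_n}(t)=\frac{\lambda_n}{\gamma_n}e^{-Mr_n/\gamma_n}.
\]
Since $r_n\asymp n^{-1/2}$ and $\sqrt n\,\gamma_n\to\infty$, we get $Mr_n/\gamma_n\to0$, so the exponential factor tends to $1$ for every fixed $M$. Hence the quantity in~\eqref{eq:null-derivative-common} is asymptotically $\sqrt n\,\lambda_n/\gamma_n\to\infty$, by the fourth condition. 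The penalty parameter is common to all pairs, so the max and min over $(i,j)$ are trivial. The conditions $\lambda_n\to0$ and $\gamma_n\to0$ are not needed for~(iii) directly. They make the construction coherent, since the penalty approaches $\lambda_n I(t\neq0)$, and they guarantee that the conditions in~\eqref{eq:oracle-exp-tuning} are mutually compatible (for example, $\gamma_n \ll k_{\min}$ eventually).

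\textbf{Main obstacle and conclusion.} The only delicate point is Step~3. Uniformity in $t$ near zero depends on $\gamma_n$ not shrinking faster than the $n^{-1/2}$ stochastic scale of the null edges; this is exactly why $\sqrt n\,\gamma_n\to\infty$ is imposed. I would also check that the constants hidden in $\tau_n$ and $r_n$ ($\sqrt{\log p}$, $\sqrt{s\log p}$) are fixed positive numbers, so they do not affect the limits (assuming $p\ge2$ so that $\log p>0$). With (i)--(iii) in hand, Theorem~\ref{thm:oracle-fixed-p} applies verbatim and yields~\eqref{eq:oracle-selection} and~\eqref{eq:oracle-asymptotic-normality}.
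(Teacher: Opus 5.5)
Your proposal is correct and follows the paper's proof essentially line by line. You verify (ii) from the explicit positive, decreasing derivative $\frac{\lambda_n}{\gamma_n}e^{-t/\gamma_n}$, then use monotonicity to evaluate the infimum near the origin (the paper writes the window as $D/\sqrt n$ rather than $Mr_n$, which is equivalent for fixed $p,s$) and the supremum at $k_{\min}/2$, invoking $\sqrt n\gamma_n\to\infty$, $\sqrt n\lambda_n/\gamma_n\to\infty$, and the last rate condition in \eqref{eq:oracle-exp-tuning}. One small aside, not part of your logical chain: $\gamma_n \ll k_{\min}$ comes from combining the last two conditions in \eqref{eq:oracle-exp-tuning}, which force $e^{-k_{\min}/(2\gamma_n)}\to 0$, not from $\gamma_n\to 0$ alone, because $k_{\min}$ may itself shrink under $\sqrt n k_{\min}\to\infty$.
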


For the EXP penalty, the ratio $\lambda_n/\gamma_n$ controls thresholding near the origin to eliminate false edges, while the decaying 
exponential term asymptotically eliminates shrinkage bias at true edges. The special case
$k=1$ in the modified Weibull family corresponds to this formulation.

\begin{corollary}[Modified Weibull penalty]
\label{cor:oracle-weibull}
Suppose that $p$ and $s$ are fixed,
 Assumption (i) hold, $\frac{1}{\gamma_n}k_{\min}\to+\infty$, and the limiting matrix $(\mathcal I_{\mathcal A}^{\,o})^{-1}$ in \eqref{eq:oracle-information} exists and is positive definite. When the modified Weibull penalty is employed with a fixed $k>0$, if
\begin{equation}
\lambda_n\to0,\ \gamma_n\to0,\ \sqrt n\,\gamma_n\to\infty,\ 
\frac{\sqrt n\,\lambda_n}{\gamma_n}\to\infty,\ \mathrm{and}\ 
\frac{\sqrt n\,\lambda_n k}{\gamma_n}
\left(\frac{k_{\min}}{2\gamma_n}\right)^{k-1}
\exp\!\left[-\left(\frac{k_{\min}}{2\gamma_n}\right)^k\right]
\to0,
\label{eq:oracle-weibull-tuning}
\end{equation}
then the conclusions of Theorem~\ref{thm:oracle-fixed-p} hold.
\end{corollary}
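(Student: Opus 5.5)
The plan is to reduce Corollary~\ref{cor:oracle-weibull} to Theorem~\ref{thm:oracle-fixed-p}. Since $p$, $s$ are fixed and $\lambda_{ij}=\lambda_n$, we have $\tau_n\asymp r_n\asymp n^{-1/2}$. It therefore suffices to verify Assumptions (ii) and (iii) for the modified Weibull penalty and to check $\sqrt n k_{\min}\to\infty$. The last condition is immediate: $k_{\min}/\gamma_n\to\infty$ and $\sqrt n\gamma_n\to\infty$ give $\sqrt n k_{\min}=(\sqrt n\gamma_n)(k_{\min}/\gamma_n)\to\infty$.

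\textbf{Assumption (ii).} By construction $p(0)=0$, and the penalty is continuous (the shift $\delta$ enforces continuity at $x^*$). Its derivative is positive, so the penalty is nondecreasing. Continuous differentiability on $(0,\infty)$ holds because the two derivative pieces agree at $x^*$, both being equal to $d^*$. For concavity we need $p'$ to be nonincreasing on $(0,\infty)$.
\begin{itemize}
\item For $k\le1$, $p'(t)=\lambda\frac{k}{\gamma}(t/\gamma)^{k-1}e^{-(t/\gamma)^k}$ is a product of two nonincreasing positive functions.
\item For $k>1$, $p'$ is constant on $(0,x^*]$. Beyond $x^*$ it is the Weibull density past its mode, which is decreasing.
\end{itemize}

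\textbf{First half of (iii), equation \eqref{eq:first-order-bias}.} Write $u=t/\gamma_n$, so that $p'(t)=\frac{\lambda_n k}{\gamma_n}h(u)$ with $h(u)=u^{k-1}e^{-u^k}$. Since $k_{\min}/(2\gamma_n)\to\infty$, eventually $k_{\min}/2>x^*_n=\gamma_n((k-1)/k)^{1/k}$. On $[k_{\min}/2,\infty)$ we are then past the mode, where $p'$ is nonincreasing. Hence the supremum is attained at $t=k_{\min}/2$:
\[
\eta_n=\frac{\lambda_n k}{\gamma_n}\left(\frac{k_{\min}}{2\gamma_n}\right)^{k-1}\exp\!\left[-\left(\frac{k_{\min}}{2\gamma_n}\right)^k\right].
\]
Because $1/\tau_n\asymp\sqrt n$ up to the constant $(\log p)^{-1/2}$, condition \eqref{eq:first-order-bias} is exactly the last condition in \eqref{eq:oracle-weibull-tuning}.

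\textbf{Second half of (iii), equation \eqref{eq:null-derivative-common}.} This is the step that needs care, because the behaviour of $p'$ near the origin depends on $k$. Fix $M>0$; then $Mr_n\asymp n^{-1/2}$. Since $\sqrt n\gamma_n\to\infty$, we have $Mr_n/\gamma_n\to0$, so the relevant $u$-range $(0,Mr_n/\gamma_n]$ shrinks to $0$.
\begin{itemize}
\item $k<1$: $h$ is decreasing, so the infimum is $h(Mr_n/\gamma_n)\ge h(1)=e^{-1}$ eventually. In fact $h$ blows up, which only helps.
\item $k=1$: the infimum is $e^{-Mr_n/\gamma_n}\to1$.
\item $k>1$: eventually $Mr_n<x^*_n$ (since $x^*_n/\gamma_n$ is a fixed positive constant). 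The derivative is then constant, equal to $\frac{\lambda_n}{\gamma_n}\,k\,c_k$ with $c_k=((k-1)/k)^{(k-1)/k}e^{-(k-1)/k}>0$.
\end{itemize}
In every case $\inf_{0<t\le Mr_n}p'(t)\ge c\,\lambda_n/\gamma_n$ for some constant $c>0$ depending only on $k$. Dividing by $\tau_n$ gives a quantity of order $\sqrt n\lambda_n/\gamma_n\to\infty$, as required.

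With (i)--(iii) and the signal and information conditions in place, Theorem~\ref{thm:oracle-fixed-p} applies directly. This yields the selection consistency \eqref{eq:oracle-selection} and the asymptotic normality \eqref{eq:oracle-asymptotic-normality}. The conditions $\lambda_n\to0$ and $\gamma_n\to0$ are not needed beyond ensuring the $\ell_0$-type regime; in particular $\gamma_n\to0$ together with $k_{\min}/\gamma_n\to\infty$ is consistent.
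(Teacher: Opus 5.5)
Your proposal is correct and follows the paper's proof essentially step for step. Both verify (ii) and (iii) by a case split on $k$: for $k>1$ the cap makes the derivative constant on $(0,D/\sqrt n\,]$ once $\sqrt n\,\gamma_n\to\infty$, and $k_{\min}/\gamma_n\to\infty$ pushes the active edges past the mode $x_n^*$, so the supremum sits at $k_{\min}/2$. Your explicit derivation of $\sqrt n\,k_{\min}=(\sqrt n\,\gamma_n)(k_{\min}/\gamma_n)\to\infty$ supplies a hypothesis of Theorem~\ref{thm:oracle-fixed-p} that the paper's proof leaves implicit.
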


For $0<k\leq1$, the penalty derivative is decreasing on $(0,\infty)$. For $k>1$, the
modification caps its increasing part; since $k_{\min}$ is bounded away from zero and
$\gamma_n\to0$, every true edge eventually falls in the decreasing region.

\begin{corollary}[Folded Gumbel penalty]
\label{cor:oracle-gumbel}
Suppose that $p$ and $s$ are fixed,
 Assumption (i) hold, $\sqrt{n}k_{\min}\to+\infty$, and the limiting matrix $(\mathcal I_{\mathcal A}^{\,o})^{-1}$ in \eqref{eq:oracle-information} exists and is positive definite. When the folded Gumbel penalty is employed, if
\begin{equation}
\lambda_n\to0,\quad\gamma_n\to0,\quad\sqrt n\,\gamma_n\to\infty,\quad
\frac{\sqrt n\,\lambda_n}{\gamma_n}\to\infty,\quad\mathrm{and}\quad
\frac{\sqrt n\,\lambda_n}{\gamma_n}e^{-k_{\min}/(2\gamma_n)}\to0,
\label{eq:oracle-gumbel-tuning}
\end{equation}
then the conclusions of Theorem~\ref{thm:oracle-fixed-p} hold.
\end{corollary}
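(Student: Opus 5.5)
The plan is to reduce Corollary~\ref{cor:oracle-gumbel} to Theorem~\ref{thm:oracle-fixed-p} by checking Assumptions (ii) and (iii) for the folded Gumbel penalty with $\lambda_{ij}=\lambda_n$. Since $p$ and $s$ are fixed, $\tau_n\asymp n^{-1/2}$ and $r_n\asymp n^{-1/2}$. Condition (iii) therefore becomes two requirements: $\sqrt n\,\eta_n\to0$, and $\sqrt n\inf_{0<t\le M/\sqrt n}p'_{\lambda_n}(t)\to\infty$ for every fixed $M$. The remaining hypotheses of Theorem~\ref{thm:oracle-fixed-p} ($\sqrt n k_{\min}\to\infty$ and existence and positive definiteness of the limit in \eqref{eq:oracle-information}) are assumed directly.

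\textbf{Assumption (ii).} Write $u=t/\gamma_n$ and $h(u)=e^{-u-e^{-u}}+e^{u-e^{u}}$, so that $p'_{\lambda_n}(t)=(\lambda_n/\gamma_n)h(t/\gamma_n)$. Clearly $p(0)=\lambda_n(e^{-1}-e^{-1})=0$, the penalty is smooth on $(0,\infty)$, and $p'\ge0$. For concavity we need $h'(u)\le0$ on $u>0$. Differentiating gives
\[
h'(u)=e^{-u-e^{-u}}\bigl(e^{-u}-1\bigr)+e^{u-e^{u}}\bigl(1-e^{u}\bigr).
\]
Both terms are nonpositive for $u\ge0$, so $h'\le0$ and $p$ is concave. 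I also note $h(0)=2e^{-1}$, and that $h$ is continuous and decreasing.

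\textbf{Condition \eqref{eq:first-order-bias}.} By monotonicity,
\[
\eta_n=\sup_{t\ge k_{\min}/2}p'(t)=(\lambda_n/\gamma_n)\,h\bigl(k_{\min}/(2\gamma_n)\bigr).
\]
For large $u$ we have $h(u)\le e^{-u}+e^{u-e^{u}}\le 2e^{-u}$, because $e^{u-e^u}\le e^{-u}$ once $e^u\ge 2u$, which holds for all $u\ge0$. Hence
\[
\sqrt n\,\eta_n\le 2\,\frac{\sqrt n\lambda_n}{\gamma_n}\,e^{-k_{\min}/(2\gamma_n)}\to0
\]
by the last condition in \eqref{eq:oracle-gumbel-tuning}. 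For this step I will use that $k_{\min}$ is fixed when $p$ is fixed; otherwise the bound is uniform anyway.

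\textbf{Condition \eqref{eq:null-derivative-common}.} For $0<t\le M r_n$, monotonicity gives $p'(t)\ge(\lambda_n/\gamma_n)h(Mr_n/\gamma_n)$. The hypothesis $\sqrt n\gamma_n\to\infty$ gives $Mr_n/\gamma_n\to0$, hence $h(Mr_n/\gamma_n)\to 2e^{-1}$. The quantity in \eqref{eq:null-derivative-common} is therefore bounded below by a constant times $\sqrt n\lambda_n/\gamma_n\to\infty$.

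\textbf{Conclusion and main obstacle.} The remaining conditions $\lambda_n\to0$ and $\gamma_n\to0$ are compatible with these and are not otherwise needed, except to guarantee the $\ell_0$-type behaviour. Theorem~\ref{thm:oracle-fixed-p} then applies. The only delicate point is the concavity verification: the two Gumbel terms must each be shown to be decreasing on $u>0$, because the second term's derivative changes sign on the negative axis. Folding is exactly what makes $u\ge0$, so both factors $(e^{-u}-1)$ and $(1-e^u)$ are nonpositive there.
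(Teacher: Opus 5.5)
Your proposal is correct and follows essentially the same route as the paper's proof. Both write $p'(t)=(\lambda_n/\gamma_n)h(t/\gamma_n)$, verify $h'\le 0$ on $u>0$ for Assumption (ii), use $h(u)\to h(0)=2/e$ as $u\to0$ (via $\sqrt n\gamma_n\to\infty$) for the null-edge condition, and bound $h(u)\le 2e^{-u}$ through $e^u\ge 2u$ for the active-edge condition. Your aside that $k_{\min}$ is fixed when $p$ is fixed is not needed, and it is not actually implied, since $\mathbf K_0$ may vary with $n$; as you note, though, the bound holds for each $n$, so nothing in your argument depends on it.
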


Because the folded Gumbel penalty derivative is positive, decreasing and bounded from above by an exponential decay term, it yields the same sufficient rate conditions as the EXP penalty.

The rate conditions in \eqref{eq:oracle-exp-tuning}--\eqref{eq:oracle-weibull-tuning} are compatible and easily satisfied. For example, When $k_{\min}$ is bounded away from zero, the deterministic sequences 
$\gamma_n=n^{-1/4}$ and $\lambda_n=n^{-1/2}$ the conditions of all three corollaries for any fixed $k>0$. When the tuning parameters are selected adaptively from the
data, the same conclusions follow provided that the corresponding stochastic rates hold in probability.

\subsubsection{Data-Adaptive Penalties}\label{data-adaptive-penalties}

By grounding regularization penalties in distributions, their empirical distribution statistics can be used to flexibly adapt to the signal from the empirical data. The MLE parameters of each penalty can be estimated from the empirical (absolute) partial correlations, \(x\), which will adjust how close they approximate the \(\ell_0\) penalty. For the EXP and Gumbel penalties, the scale (\(\gamma\)) parameter can be estimated directly. For the Exponential distribution, the MLE \(\gamma\) is defined by the mean, \(\hat{\gamma}_{MLE} = \bar{x}\) \citep{johnson1994continuous}. For the Weibull distribution, the MLE for the shape parameter, \(k\), must be solved numerically using,

\[
\frac{1}{\hat{k}} + \frac{1}{n} \sum_{i=1}^n \ln x_i - \frac{\sum_{i=1}^n x_i^{k} \ln x_i}{\sum_{i=1}^n x_i^{k}} = 0,
\]

\noindent where after the MLE \(\gamma\) can be estimated analytically using \(k\): \(\hat{\gamma}_{MLE} = \left( \frac{1}{n} \sum_{i=1}^n x_i^{\hat{k}} \right)^{\frac{1}{\hat{k}}}\) \citep{rinne2008weibull}.

The MLE for folded Gumbel's \(\gamma\) needs to be derived from the standard Gumbel log-likelihood \citep{kotz2000extreme},

\[
{\cal{L}}(\gamma) = \sum_{i=1}^n \left[ \frac{x_i}{\gamma} - e^{x_i/\gamma} - \log\gamma \right],
\]

\noindent where the log-likelihood must be folded following Johnson et al. (\citeyear{johnson1994continuous}), \(f_{fold}(x) = f(x) + f(-x)\), leading to \citep{nadarajah2015new}:

\[
{\cal{L}}(\gamma) = \sum_{i=1}^n \left[ \log\left( e^{\frac{x_i}{\gamma} - e^{\frac{x_i}{\gamma}}} + e^{-\frac{x_i}{\gamma} - e^{-\frac{x_i}{\gamma}}} \right) - \log\gamma \right].
\]

The MLE parameters characterize the central tendency of the full empirical partial correlation distribution, which spans both signal and noise. Setting \(\gamma\) directly to its MLE estimate would therefore regularize signal and noise, undermining the penalty's ability to selectively shrink near-zero edges. Instead, an adjustment to the MLE \(\gamma\), \(\gamma^*\), is needed to place each penalty's mass over the noise floor of the distribution.

To identify an appropriate \(\gamma^*\), we use the EXP penalty as a guiding case: Wang et al. (\citeyear{wang2018variable}) established through simulation that \(\gamma = 0.01\) recovers oracle properties, providing a validated operating point \citep{GGMncv}. Rather than treating this as an arbitrary fixed value, we calibrated the quantile of each distribution to correspond to \(\gamma^* \approx 0.01\) given ``typical'' partial correlation distributions. We used the median MLE scale (0.106) and shape (1.091) estimates as our ``typical'' guides, in effect, using Wang et al.'s result to calibrate the quantile-based selection rule that then generalizes to all three penalties.

To this extent, we identified the approximate quantile that aligned with \(\gamma^* \approx 0.01\) for all three penalties that were based in the empirical data. Using nearest rounded parameters, 0.10 for all three penalties' initial \(\gamma\) and 1.00 for Weibull's \(k\), we estimated the nearest quantile that aligned with \(\gamma^* = 0.01\). By using \(k = 1\), the Weibull penalty collapses to the EXP penalty. The resulting quantile was roughly the first decile (\(p = 0.10\)) for the EXP and Weibull penalties, \(\gamma^* = \gamma (-\ln(1 - p))\), with a value of about 0.0105.

Because the Gumbel penalty is represented by the folded Gumbel distribution, the quantile must be numerically derived as the algebraic manipulation becomes intractable, \(e^{-e^{-\frac{q}{\gamma}}} - e^{-e^{\frac{q}{\gamma}}} = p\). When solving for \(p = 0.10\) (first decile) and \(\gamma = 0.10\), then \(q\) and, consequently \(\gamma^*_{Gumbel}\), was about 0.0136.

We note that this calibration substitutes one free parameter (\(\gamma\)) for another (first decile of the distribution); however, unlike a fixed \(\gamma\), the resulting \(\gamma^*\) shifts with the empirical distribution, scaling the penalty's aggressiveness to the noise floor of each dataset rather than assuming a universal signal level.\footnote{Sensitivity analyses evaluating the 5\textsuperscript{th} and 15\textsuperscript{th} percentile as alternative thresholds for EXP (\href{https://osf.io/6hxkt/files/yuh8w}{\color{blue} OSF}), Gumbel (\href{https://osf.io/pa2mq}{\color{blue} OSF}), and Weibull (\href{https://osf.io/p8bwm}{\color{blue} OSF}) are provided as Supplemental Materials \citep{christensen2026adaptive_osf}. Across both network topologies, the quantile threshold governed a monotone sensitivity/specificity trade-off, with differences becoming more apparent at \(N\) = 10,000. All adaptive penalties continued to accumulate sensitivity with increasing sample size.} Together, the three penalties---EXP, Weibull, and Gumbel---form a coherent data-adaptive regularization framework unified by extreme value theory.

\section{Present Research}\label{present-research}

The present study evaluates whether grounding regularization penalties in the empirical distribution of partial correlations improves GGM estimation relative to contemporary standards in psychology. Our proposed family of data-adaptive penalties are evaluated against EBICglasso, BGGM, and several non-convex penalties including Smoothly Clipped Absolute Deviation (SCAD; \citealp{fan2001variable}), Minimax Concave Penalty (MCP; \citealp{zhang2010nearly}), Atan \citep{wang2016variable}, and EXP (\citealp{wang2018variable}; see Appendix \ref{appendix-penalties}). EBICglasso and BGGM represent field standards, SCAD and MCP establish whether non-convex penalties more broadly outperform \(\ell_1\) regularization \citep{fan2009network,williams2020beyond}, and  Atan and EXP are static \(\ell_0\) approximating non-convex penalties with EXP directly isolating whether adaptivity contributes anything beyond \(\ell_0\) approximation alone.

These penalties are evaluated using simulated networks that vary in topology: Stochastic Block Model (SBM; \citealp{holland1983stochastic}) and Small-world Network (SWN; \citealp{watts1998collective}). These topologies represent divergent network structures expected to be common in psychology: SBMs consist of densely connected sets of nodes forming clusters that represent multidimensional constructs \citep{christensen2026network,sekulovski2025stochastic} whereas SWNs exhibit local clustering without underlying dimensionality, representing a unified system. Performance was assessed in terms of edge recovery, parameter bias (edge and centrality), and rank-order centrality congruence. Together, these comparisons are designed to disentangle whether improvements in edge and centrality recovery stem from non-convexity, \(\ell_0\) approximation, or adaptivity across a broad range of plausible psychological network structures. Finally, to illustrate the practical consequences, we present an empirical example using
Johnson's (\citeyear{johnson2014measuring}) IPIP-NEO \citep{goldberg2006international} dataset (Appendix \ref{appendix-empirical}), examining a cross-sectional comparison of methods at a typical sample size and model selection consistency across increasing subsample sizes.

\section{Method}\label{method}

Multivariate normal continuous data were generated from two network topologies: SBM and SWN. Following Christensen and Choi (\citeyear{christensen2026network}), we re-analyzed the 290 empirical datasets to determine the appropriate parameter space for each simulated topology. Sample size was evaluated across contexts from common applications in psychology, 100, 250, 500, and 1,000 \citep{huth2025statistical} to large samples of 2,500 and 10,000. The smaller samples served the purpose of providing practical recommendations to applied researchers whereas the larger samples, particularly 10,000, were intended to numerically investigate the oracle properties of the network estimation methods.

\subsection{Simulation Design}\label{simulation-design}

The SBM topology varied number of communities (2, 3, 4) and variables per community (4, 6, 8), selected to cover the majority of total variables across psychological networks (72.8\% of the 290 networks; \citealp{christensen2026network}). Within- (0.80, 0.90) and between-community (0.20, 0.40) edge probabilities mirrored the edge density patterns of EBICglasso- and BGGM-estimated networks from the empirical datasets.

The SWN topology used 10, 20, 30, and 40 variables (covering 68.6\% of the 290 networks) and densities of 0.25, 0.50, and 0.75, consistent with the majority of networks in the literature \citep{christensen2026network,wysocki2019penalty}. Rewiring probabilities (\(r_p\) = 0.10, 0.21, 0.32) were determined by computing the small-worldness metric, \(\omega\) \citep{telesford2011ubiquity}, for EBICglasso- and BGGM-estimated networks across the empirical datasets (\href{https://osf.io/6hxkt/files/h25sq}{\color{blue} OSF}), using two standard deviations above and below the mean from Figure 2A in Telesford et al. (\citeyear{telesford2011ubiquity}).

For both topologies, SNR for the edge weights was set to 0.65 (weak), 1 (moderate), and 1.35 (strong), representing \(\pm2 SD\) in the empirical partial correlations (\href{https://osf.io/6hxkt/files/629du}{\color{blue} OSF}). Both topologies were fully factorial with the SBM conditions of sample size \(\times\) number of communities \(\times\) number of variables per community \(\times\) within-community edge probability \(\times\) between-community probability \(\times\) SNR (6 \(\times\) 3 \(\times\) 3 \(\times\) 2 \(\times\) 2 \(\times\) 3) and SWN conditions of sample size \(\times\) number of variables \(\times\) network density \(\times\) rewiring probability \(\times\) SNR (6 \(\times\) 4 \(\times\) 3 \(\times\) 3 \(\times\) 3). For each condition across topologies, 100 replicate networks were generated. The full design was therefore 648 conditions and 64,800 replicates for each topology, resulting in 1,296 total conditions with 129,600 total replicates.

\subsection{Network Generation}\label{network-generation}

Full procedural details for both network topologies are provided in the Supplemental Materials (\href{https://osf.io/6hxkt/files/37srg}{\color{blue} OSF}), which describe five general steps: adjacency matrix generation, edge weight generation, edge weight assignment, negative sign assignment, and, if necessary, conditioning for positive definiteness.

The SBM adjacency matrices were generated as Planted Partition Models \citep{condon2001algorithms} with probabilistic edge assignment governed by within- and between-community probabilities. SWN adjacency matrices were generated via a novel procedure extending the Watts-Strogatz rewiring approach \citep{watts1998collective}: an initial ring lattice was pruned to the target density and edges were rewired with a preference for higher-degree nodes to encourage hub-like behavior and degree heterogeneity.

Following Christensen and Choi (\citeyear{christensen2026network}), edge weights were generated to mirror real-world data while preserving the relationship between the partial correlation distribution and sample characteristics (number of variables and sample size). The MLE shape and scale parameters from Huth and colleagues' (\citeyear{huth2025statistical}) datasets were modeled using a seemingly unrelated regression (SUR; \citealp{zellner1962efficient}) to account for their residual correlation (\(r = 0.262\)). The final model explained substantial variance in both shape (\(R^2 = 0.887\)) and scale (\(R^2 = 0.885\)). Condition-specific parameters predicted shape and scale, and bootstrapped SUR residuals introduced variability while preserving their correlation. For each replicate network, \(\frac{p(p - 1)}{2}\) candidate weights were drawn from the Weibull distribution using these predicted parameters, and a subset matching the number of non-zero edges was selected via weighted probability favoring larger weights.

Edge weights were then assigned to match topology-specific structural constraints: for SBM, larger weights were generally assigned to within-community edges, with a hyperparameter allowing some large weights to diffuse to between-community edges; for SWN, weights were assigned inversely proportional to topological distance from the ring lattice \citep{muldoon2016small}. Negative signs were assigned to between-community edges only (SBM) or by flipping variable signs (SWN), with the proportion drawn from \(N(0.343, 0.086)\) based on empirical prevalence.

Both topology procedures produce a sparse partial correlation matrix (i.e., a network), which is then converted to a zero-order correlation matrix to generate data. This matrix is not guaranteed to be positive definite, a well-known challenge in generating sparse partial correlation matrices \citep{joe2006generating,warton2008penalized}; when it was not, ridge conditioning was applied \citep{peeters2020spectral,warton2008penalized}, targeting a spectral condition number of 30 \citep{won2013condition}.

\subsection{Sample Data Generation}\label{sample-data-generation}

Using the zero-order correlation matrix, \(\mathbf{R}\), implied by the sparse network, Cholesky decomposition was performed

\[
\mathbf{R} = \mathbf{U}^\prime \mathbf{U}
\]

\noindent and the sample data matrix of continuous variables was generated

\[
\mathbf{X} = \mathbf{Z} \mathbf{U}
\]

\noindent where \(\mathbf{Z}\) is a \(n \times p\) matrix of random multivariate normal data. Skew was generated using the sinh-arcsinh transformation \citep{jones2009sinh}, which applies \(\sinh(\delta \times (\text{asinh}(z) + \epsilon))\) to standard normal variables where \(\epsilon\) controls skew and \(\delta\) controls kurtosis.

Skew for each variable was allowed to vary randomly with values drawn from -1 to 0 in increments of 0.05. We did not manipulate skew under the premise that small-to-moderate skew is common, and we wanted to maintain empirical realism without adding another factorial condition to an already large simulation.

\subsection{Network Estimation}\label{network-estimation}

\subsubsection{EBICglasso}\label{ebicglasso}

The EBICglasso was estimated using field defaults, where a grid search over the penalty hyperparameter, \(\lambda\), was conducted across 100 candidate values and selected via EBIC with its own hyperparameter controlling preference for model complexity set to 0.50. Following standard procedure \citep{epskamp2018tutorial}, the search bounds were defined as

\[
\lambda_{\max} = \max_{i \neq j} \lvert \hat{r}_{ij} \rvert
\]

\noindent and

\[
\lambda_{\min} = \lambda_{\max} \cdot 0.01,
\]

\noindent where \(\hat{r}_{ij}\) denotes the off-diagonal elements of the sample correlation matrix \(\hat{\mathbf{R}}\). The full sequence of \(\lambda\) values was obtained by exponentiating 100 equally spaced values between \(\log \lambda_{min}\) and \(\log \lambda_{max}\), yielding a geometrically spaced grid from \(\log \lambda_{min}\) to \(\log \lambda_{max}\).

\subsubsection{BGGM}\label{bggm}

Rather than applying a regularization penalty, BGGM estimates a network through Bayesian inference, updating prior beliefs about the network structure, \({\cal{S}}\), and partial correlation parameters, \(\boldsymbol{\Theta}\), to a posterior distribution after observing the data \citep{huth2024simplifying}:

\[
p(\boldsymbol{\Theta}, {\cal{S}} \mid \textrm{data}) \propto 
p(\textrm{data} \mid \boldsymbol{\Theta}, {\cal{S}}) \times p(\boldsymbol{\Theta} \mid {\cal{S}}) 
\times p({\cal{S}}).
\]

We used default settings in the \{easybgm\} package (version 0.4.0; \cite{easybgm}), which calls \{BGGM\} (version 2.1.6; \citealp{williams2020bggm}) for continuous data, placing a matrix-F prior on the covariance matrix with standard deviation 0.25 \citep{mulder2018matrix} and drawing 10,000 posterior samples via MCMC. Edge inclusion was determined using the median probability model \citep{barbieri2004optimal,barbieri2021median}, retaining edges with \(\textrm{BF}_{10} > 1\) (i.e., posterior inclusion probability greater than 0.50) with posterior parameter estimates as the point estimates.

\subsubsection{Non-convex Penalties}\label{non-convex-penalties}

The data-adaptive and non-convex penalties, Atan \citep{wang2016variable}, EXP \citep{wang2018variable}, MCP \citep{zhang2010nearly}, and SCAD (\citealp{fan2001variable}; Table \ref{appendix-penalties}), all followed the same \(\lambda\) selection procedure as EBICglasso except 50 lambda values rather than 100 were used in the search and model selection was based on BIC rather than EBIC. This procedure follows previous applications of these methods in the literature \citep{williams2020beyond}.

Because these penalties render the penalized likelihood non-convex, direct optimization via GLASSO is not possible. Each non-convex penalty, including the adaptive penalties, was therefore optimized via the Local Linear Approximation (LLA; \citealp{zou2008one,fan2014strong}), which converts the problem into a sequence of weighted GLASSO problems by locally approximating the non-convex penalty with a weighted \(\ell_1\) penalty whose weights are the penalty derivative evaluated at the current precision matrix estimate. This process iterates until the maximum absolute change in the precision matrix fell below \(10^{-3}\) or 10,000 iterations were reached.

\subsection{Evaluation Metrics}\label{evaluation-metrics}

Performance was evaluated for edge recovery, parameter bias (edge and centrality), and rank-order centrality congruence (see Appendix \ref{appendix-metrics} for formal definitions). Edge recovery was assessed via sensitivity and specificity. Parameter bias was quantified as mean bias error for true positive edges (MBE\(_{TP}\)) and node strength (MBE\(_{NS}\)), where positive and negative values indicate overestimation and underestimation, respectively. Node strength (the sum of absolute edge weights across a node's connections) served as the centrality measure given its prevalence in the literature. Rank-order centrality congruence was assessed using Kendall's \(\tau_b\) between estimated and population node strength, with qualitative benchmarks of adequate (\(\tau_b \geq 0.50\)), good (\(\tau_b \geq 0.60\)), and excellent (\(\tau_b \geq 0.70\)) following Christensen and Choi (\citeyear{christensen2026network}).

\subsection{Transparency and Openness}\label{transparency-and-openness}

All manuscript preparation, simulations, analyses, and visualizations were carried out using R (version 4.5.1; \citealp{R-base}). Details of the R packages used in this study are presented in Appendix \ref{appendix-code}.

\section{Results}\label{results}

Edge recovery across the SBM and SWN conditions was evaluated to characterize each method's general tendencies (Figure \ref{fig:sen_spec_simulation}). All methods improved in both sensitivity and specificity as sample size increased, with one exception: EBICglasso's sensitivity approached one, but its specificity remained flat or declined, reflecting the tendency of \(\ell_1\) penalization to over-select edges.

\begin{figure}[H]
\centering
\includegraphics[width=\textwidth]{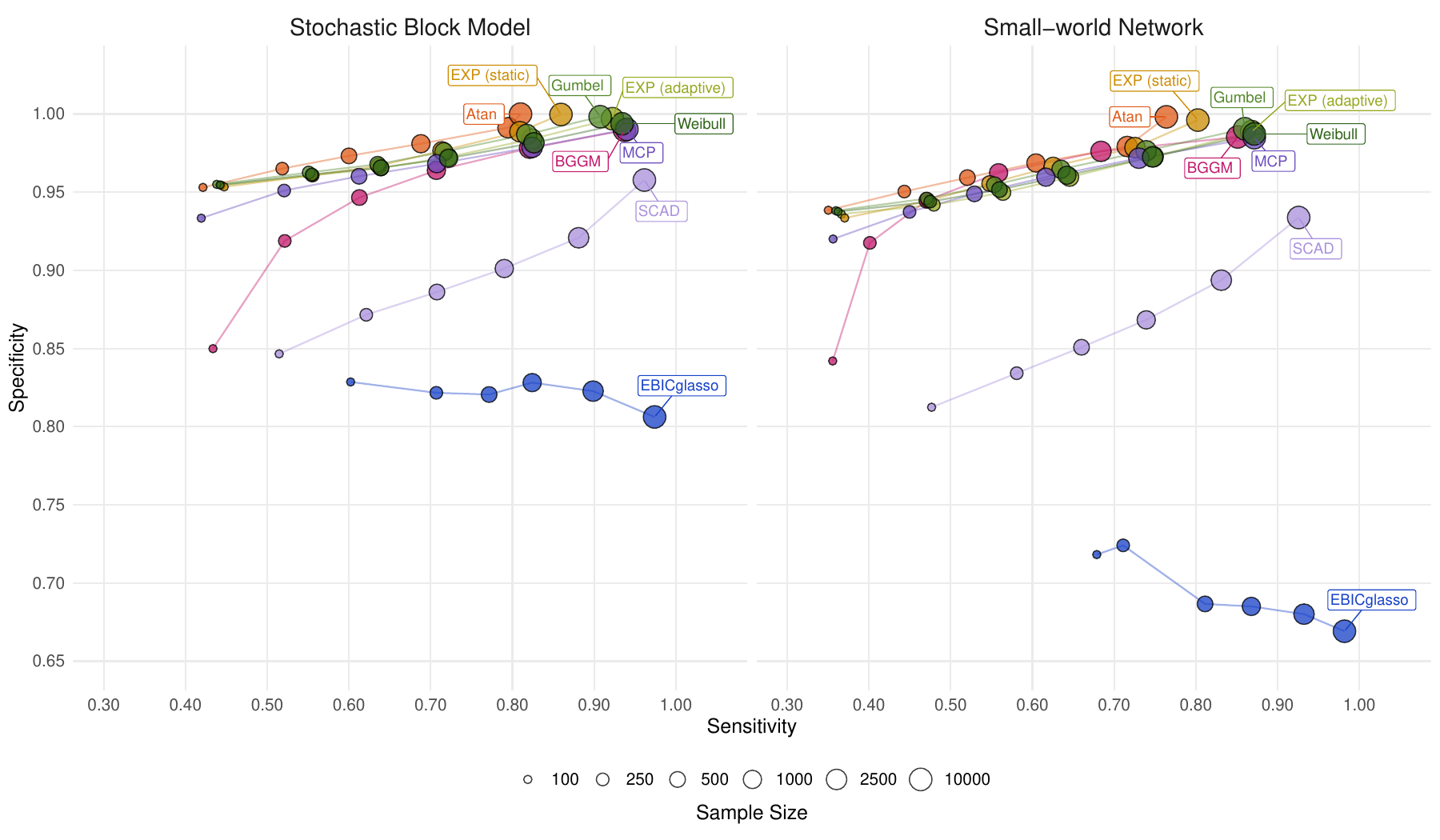}
\caption{\label{fig:sen_spec_simulation} Sensitivity and specificity of edge recovery across methods and sample sizes for the stochastic block model (SBM; left) and small-world network (SWN; right) generating mechanisms. Each point represents the average sensitivity and specificity at a given sample size (100--10,000), with trajectories tracing each method's path across increasing sample sizes. Better overall performance corresponds to the upper-right region of the plot.}
\end{figure}

BGGM showed a conservative pattern---moderate specificity paired with lower sensitivity at smaller sample sizes (N = 100, 250, 500)---consistent with the conservative thresholding behavior of credible interval-based inference, gradually converging with the remaining methods (excluding EBICglasso and SCAD) at the largest sample size tested (N = 10,000). SCAD had the second-highest sensitivity across sample sizes while approaching the near-ceiling specificity of the remaining methods.

The remaining methods (Atan, both EXP variants, Gumbel, MCP, and Weibull) maintained high specificity (\(\geq 0.90\)) with sensitivity increasing as sample size grew. MCP lagged in both metrics until N = 2,500. Atan and EXP (static) tracked one another closely, achieving the highest specificity across sample sizes but leveling off near 0.80 sensitivity at N = 10,000, marking a plateau in recovery. This plateau is where EXP (static) and EXP (adaptive) diverged most: the adaptive penalty continued accruing sensitivity where the static form stalled. The adaptive penalties overall matched Atan and EXP (static) in specificity while exceeding every other method's sensitivity except EBICglasso and SCAD, placing them consistently near the top of all methods on both metrics, especially at larger sample sizes.

\begin{figure}[H]
\centering
\includegraphics[width=\textwidth]{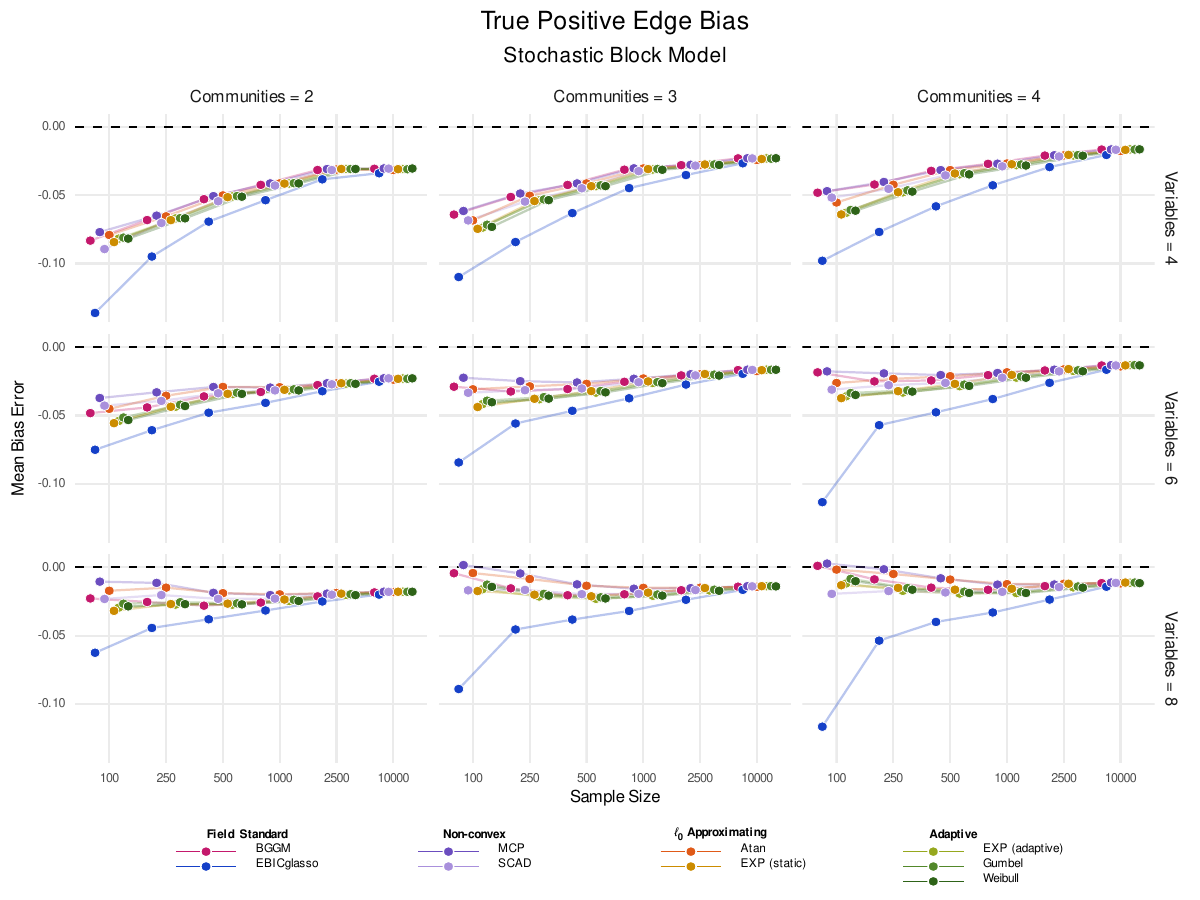}
\caption{\label{fig:sbm_mbe} Mean bias error for true positive edges across methods, sample sizes, and network conditions under the stochastic block model (SBM) generating mechanism and broken down by number of communities (columns) and number of variables per community (rows). Values below zero indicate underestimation and values above zero indicate overestimation of true edge magnitudes in the population network. Each point represents the average MBE at a given sample size.}
\end{figure}

Parameter bias (edge and centrality) and rank-order centrality congruence results below are broken down by sample size, number of communities, and number of variables per community for SBM networks; parallel SWN results, broken down by sample size, network density, and total number of variables, are provided as Supplementary Material (\href{https://osf.io/6hxkt/files/37srg}{\color{blue} OSF}) and mirrored the SBM pattern.

Nearly all methods showed neutral-to-negative edge bias for true positives (TPs), indicating that estimated magnitudes tended to fall below population values (Figure \ref{fig:sbm_mbe}), and bias generally decreased as sample size, number of communities, and number of variables per community increased. Methods clustered into three groups: low bias (Atan, BGGM, MCP, and SCAD), moderate bias (EXP static and adaptive, Gumbel, and Weibull), and comparatively large bias (EBICglasso), with the low--moderate separation smaller than the moderate--large separation. EBICglasso's bias was most pronounced at small sample sizes and converged toward the other methods as N grew.

Consistent with the theoretical behavior of the \(\ell_1\) penalty, median absolute bias for false positive (FP) edges was lowest for EBICglasso (\href{https://osf.io/6hxkt/files/pdf2k}{\color{blue} OSF})---it shrinks small spurious edges toward zero even when it fails to exclude them outright---followed by SCAD, which partially inherits this property from its initial \(\ell_1\) shape while maintaining lower TP bias. The remaining methods were largely comparable, with FP bias decreasing as sample size and network sparsity (fewer communities, fewer variables per community) increased.

These three sources of error---underestimated TP edges, included FP edges, and excluded true edges (\href{https://osf.io/6hxkt/files/pdf2k}{\color{blue} OSF})---jointly propagate into centrality bias (Figure \ref{fig:sbm_centrality_mbe}), with the largest divergence at the smallest sample size (\(N\) = 100): BGGM overestimated centrality while EBICglasso underestimated it. These opposing directions are mechanistically distinct---BGGM's overestimation reflects accumulating small, non-negligible FP edges across many node connections, while EBICglasso's underestimation reflects its heavy penalization of TP edges even as FP edges are minimized. All other methods slightly underestimated centrality, consistent with the negative TP edge bias observed in Figure \ref{fig:sbm_mbe}; even modest edge bias can compound into substantial centrality bias as it accumulates across nodes.

\begin{figure}[H]
\centering
\includegraphics[width=\textwidth]{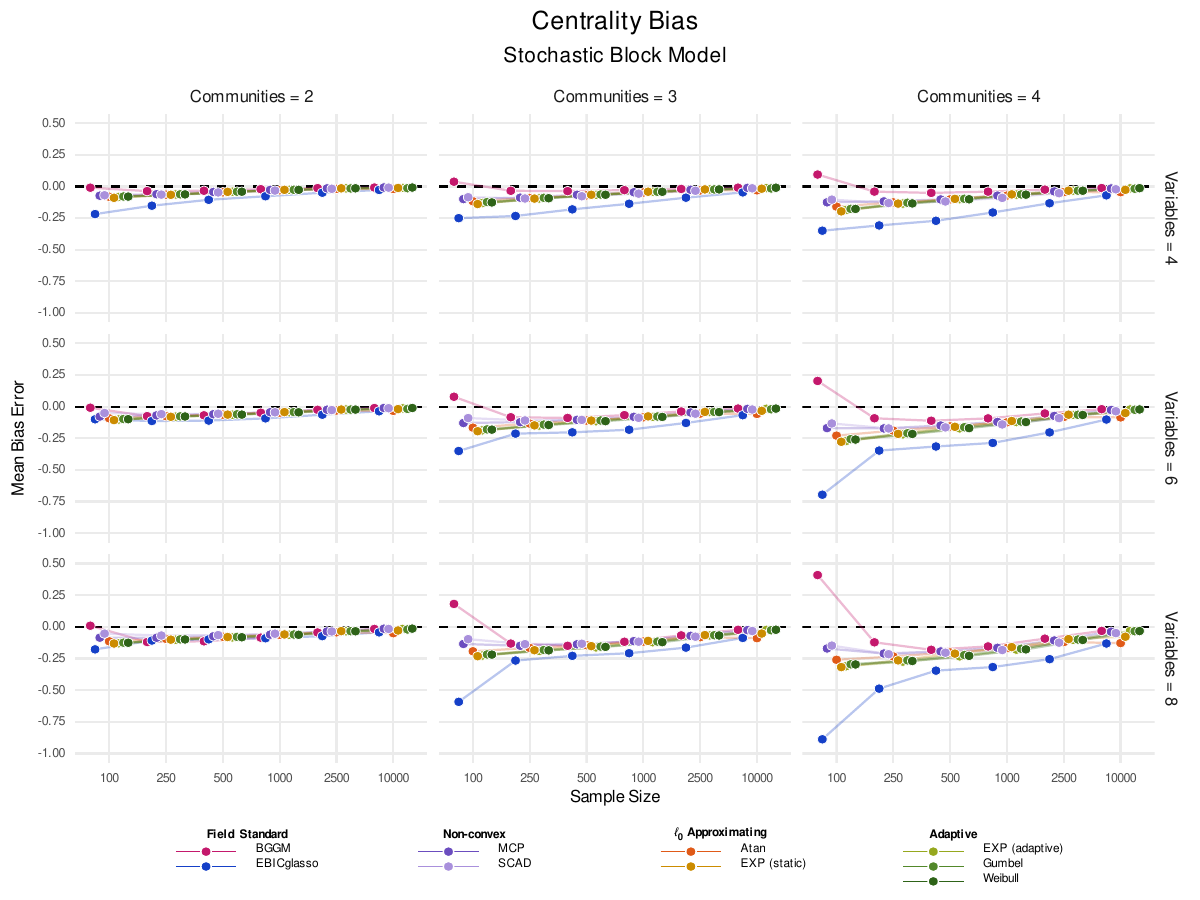}
\caption{\label{fig:sbm_centrality_mbe} Mean bias error for node strength across methods, sample sizes, and network conditions under the stochastic block model (SBM) generating mechanism and broken down by number of communities (columns) and number of variables per community (rows). Values below zero indicate underestimation and values above zero indicate overestimation of true node strength in the population network. Each point represents the average MBE at a given sample size.}
\end{figure}

Centrality bias decreased as sample size increased and total number of variables (communities and variables per community) decreased---opposite in direction to the TP edge bias pattern (Figure \ref{fig:sbm_mbe})---reflecting the compounding role of FPs and FNs in centrality estimation: node strength accumulates errors across \(p - 1\) potential edges, where FP edges contribute upward bias (overestimation) and FN edges contribute downward bias (underestimation), with net centrality bias set by their relative magnitudes across the full edge set (Figure \ref{fig:sen_spec_simulation}). This accumulation amplifies any imbalance between these error sources in larger networks, producing greater centrality bias even as TP bias itself decreases.

\begin{figure}[H]
\centering
\includegraphics[width=\textwidth]{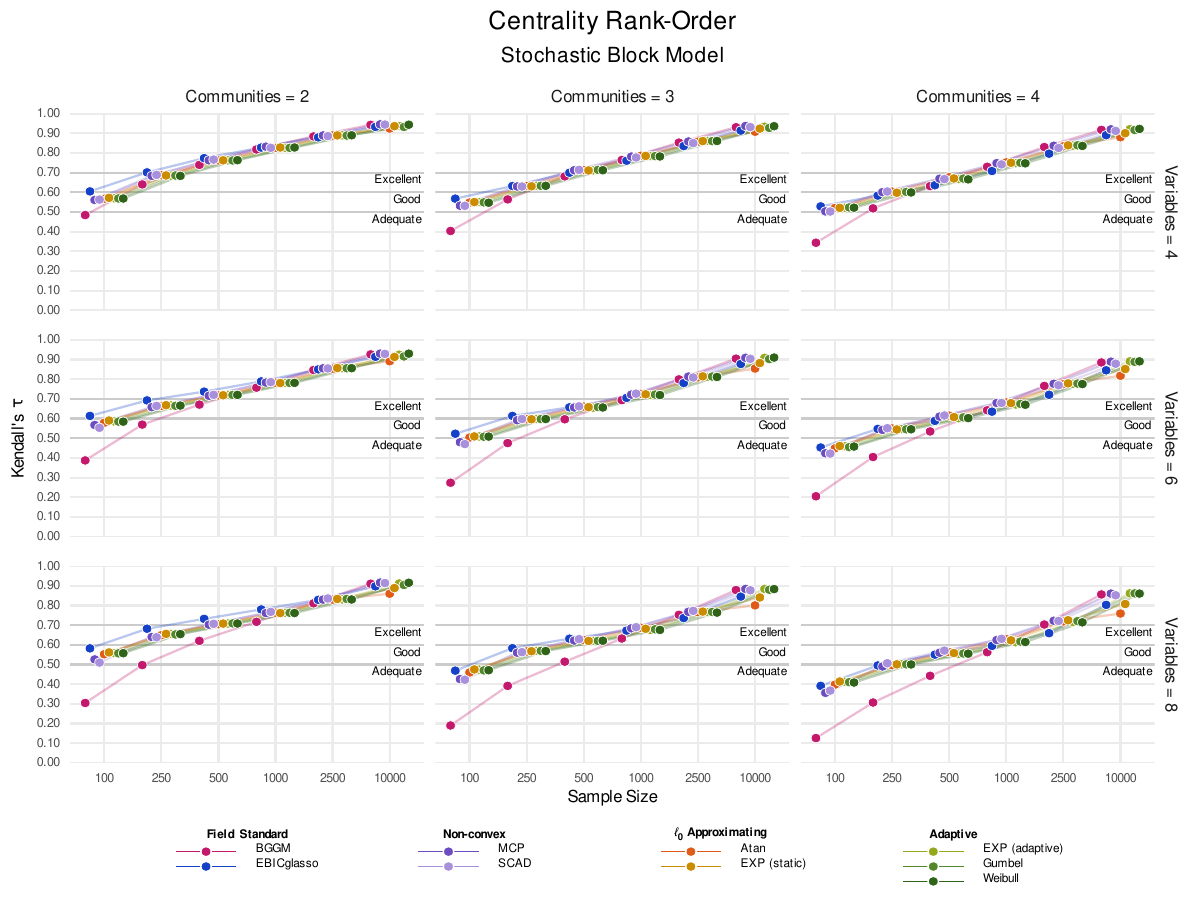}
\caption{\label{fig:sbm_kendall} Kendall's \(\tau_b\) between estimated and population node strength across methods, sample sizes, and network conditions under the stochastic block model (SBM) generating mechanism and broken down by number of communities (columns) and number of variables per community (rows). Horizontal reference lines denote qualitative benchmarks for congruence: adequate \(\tau_b \ge 0.50\), good \(\tau_b \ge 0.60\), excellent \(\tau_b \ge 0.70\). Each point represents the average \(\tau_b\) at a given sample size.}
\end{figure}

Rank-order centrality congruence followed a pattern broadly consistent with centrality bias: \(\tau_b\) improved as sample size increased and total number of variables decreased (Figure \ref{fig:sbm_kendall}), with most methods reaching at least good congruence by N = 1,000 in networks with fewer communities and variables per community but requiring N \(\ge\) 2,500 to reliably reach good congruence in larger networks. The adaptive penalties were the most consistent methods to achieve excellent congruence across conditions at larger sample sizes.

EBICglasso and BGGM diverged from the remaining methods: EBICglasso showed slightly higher \(\tau_b\) at smaller sample sizes with fewer communities, whereas BGGM showed substantially lower \(\tau_b\) across these same smaller sample sizes, degrading further as the number of communities and variables per community increased. MCP and SCAD started a step below the remaining methods at the smallest sample size (\(N\) = 100) but converged with, and ultimately surpassed, Atan and EXP (static) by the largest sample size (\(N\) = 10,000). The adaptive penalties were consistently among the top performers across conditions.

\section{Discussion}\label{discussion}

This study investigated whether EVT regularization penalties that adapt to the distribution of empirical partial correlations offer better recovery of edges, lower parameter bias, and more accurate variable selection in GGMs. Our simulation evaluated several regularization penalties across two network topologies to establish whether their performance was due to non-convexity, \(\ell_0\) approximation, or adaptivity. Our results indicated that their performance was primarily driven by \(\ell_0\) approximation with secondary improvements driven by adaptivity as sample size grew. Overall, the adaptive penalties demonstrated the most consistent performance across conditions with high specificity, sensitivity that increased with sample size, minimal parameter bias, and high rank-order centrality congruence. These results are consistent with the oracle properties we formally established.

\subsection{Simulation}\label{simulation}

EBICglasso network estimation has been the field standard since psychometric network models were first introduced to the psychological literature \citep{qgraph}. Our simulation design allowed for precise inferences into where performance differences emerged in the gradient from the EBICglasso's \(\ell_1\) penalty to non-convex and \(\ell_0\) approximating penalties (including the adaptive penalties). Consistent with previous literature \citep{isvoranu2021estimation,williams2020beyond,wysocki2019penalty}, EBICglasso's specificity declined as sample size increased, regardless of topology, indicating that false positives were not mitigated by more stable signal. Because the \(\ell_1\) penalty applies uniform shrinkage, true edges were consistently attenuated even as EBICglasso accumulated false positives, producing underestimated edge weights and substantial downward bias in node strength.

Non-convex penalties had consistently higher specificity than EBICglasso, and lower sensitivity until the largest sample size tested (\(N\) = 10,000), paralleling Williams's (\citeyear{williams2020beyond}) findings. The edge recovery patterns for SCAD and MCP revealed how the penalty derivative shape affects sensitivity and specificity (see Appendix \ref{appendix-penalties}): retaining some shrinkage on small-to-moderate edge parameters, relative to \(\ell_0\) approximators, reduced bias in false positive edge and centrality parameters.

The \(\ell_0\) approximators, Atan and EXP (static), had the highest specificity and lowest sensitivity across conditions, constraining the false positive rate to approximately 0.05 or lower regardless of sample size and topology. Sensitivity increased more slowly than for other methods and stalled once specificity reached ceiling at \(N\) = 10,000, suggesting that fixed hyperparameters exclude some signal at larger sample sizes. A consequence of this result was a larger-than-expected negative centrality bias at the largest sample sizes and total number of variables in SBMs (Figure \ref{fig:sbm_centrality_mbe}).

Among the adaptive penalties, differences were negligible across conditions: all three maintained high specificity and low parameter bias comparable to the static \(\ell_0\) approximators while continuing to accumulate sensitivity at larger sample sizes. The most theoretically controlled comparison was between static and adaptive EXP, which share identical penalty shapes and differ only in whether \(\gamma\) is fixed or calibrated to the empirical partial correlation distribution. The two variants tracked closely at smaller sample sizes, producing nearly indistinguishable edge recovery and parameter bias, but diverged as sample size grew: the static form plateaued in sensitivity while the adaptive form continued to improve, with a parallel advantage emerging in parameter bias, especially with a greater total number of variables. These results indicate that the adaptive penalty's gain over its static counterpart is concentrated at larger sample sizes.

Across adaptive penalties, there was no clear separation in performance, expected given their equivalences (Appendix \ref{appendix-evt}). On theoretical grounds, the Weibull penalty offers parameter flexibility and interpretive advantages that motivate its use in practice: its shape parameter can adjust to distributions that deviate more substantially from an exponential form. Although such deviations were rare among the empirical datasets evaluated here, this flexibility may allow the Weibull penalty to generalize more broadly (e.g., to data outside of psychology). Further, its parameters offer inference into the SNR (shape) and mean of the absolute partial correlations (scale), providing diagnostic information that can enhance the empirical understanding of partial correlations in psychology and adjacent fields.

In contrast to regularization penalties, BGGM selects edges based on posterior evidence rather than penalized likelihood, retaining only edges with sufficient evidentiary support \citep{huth2025statistical,easybgm,williams2021bayesian}. This edge selection produced more conservative network estimation relative to the regularization methods, beginning with moderate-to-high specificity and low sensitivity, which steadily increased with sample size. The rate of improvement differed by topology with greater overestimation of centrality at smaller SWN sample sizes, particularly in denser networks with more variables, before shifting to underestimation as sample size increased. This bias carried into centrality congruence, where it had the worst rank-order congruence at small sample sizes (up to \(N\) = 1,000) across both topologies and remained the worst performer under SWN topology regardless of sample size. Despite these limitations, BGGM's edge recovery trajectory of jointly increasing sensitivity and specificity is consistent with model selection consistency.

\subsection{Empirical Example}\label{empirical-example}

Our empirical example examined model selection consistency by evaluating how well each method converged to its own large sample network across increasing subsample sizes (see Appendix \ref{appendix-empirical}). These patterns largely paralleled the simulation findings: EBICglasso maintained relatively high sensitivity but exhibited steadily decreasing specificity whereas BGGM began conservatively in both metrics before ultimately achieving the highest specificity at the largest subsample size. All penalties except EBICglasso demonstrated trajectories consistent with model selection consistency. The adaptive penalties and EXP (static) converged most reliably to their large sample networks by maintaining stable specificity and steadily increasing sensitivity, suggesting that these penalties incorporate true edges conservatively without accumulating false positives as sample size grows.

Subsample comparison at a typical sample size (\(N\) = 500) revealed that Weibull occupied a consensus position between EBICglasso and BGGM, with all of its edges represented in one or both networks. The centrality rankings corroborated this intermediate position, though agreement was stronger with EBICglasso than BGGM (see Appendix \ref{appendix-empirical}). All three methods converged on the same most central node in the full sample (\(N\) = 212,265), providing a benchmark against which the \(N\) = 500 discrepancies become interpretable: at a typical sample size, method choice alone determined which node was identified as most central. This finding reinforces the broader concern that centrality, particularly the identification of a single most central node, should be interpreted with caution at sample sizes typical in psychology \citep{bringmann2019what,christensen2026network}.

\subsection{Implications for Applied Researchers}\label{implications-for-applied-researchers}

The simulation and empirical results raise concerns about contemporary field standards. The popularity of EBICglasso has persisted despite well-known limitations \citep{fan2009network,isvoranu2021estimation,williams2020beyond,wysocki2019penalty}. Our study, to our knowledge, provides the first evidence of their practical consequences: across network topologies, EBICglasso consistently underestimated edge and centrality parameters at sample sizes typical in psychology despite accumulating false positive edges. Rank-order centrality congruence was less affected, suggesting that EBICglasso's false positives may be small enough to preserve relative ordering even as they distort parameter estimates. BGGM, which is gaining popularity, exhibited the opposite pattern: conservative edge recovery with overestimation of centrality at small samples that crosses to negative bias as sample size grows, with recovery patterns that differed markedly by topology and rank-order congruence that lagged all other methods until \(N\) = 10,000. Together, these results suggest that the dominance of EBICglasso and BGGM as field standards warrants reassessment.

Such a reassessment should begin by clarifying the inferential goals of psychometric network models. Edges serve as the foundation of measurement \citep{van2024measurement} with centrality providing inferences about variable importance (cf. \citealp{bringmann2019what}). Researchers most often interpret individual edges as psychologically meaningful \citep{chambon2026network} and assess construct validity based on how nodes are connected \citep{van2024measurement}. Although a balance of high sensitivity and specificity is ideal, we argue that methods high in specificity must be preferred at sample sizes common in psychology (\(N \le\) 1,000) to ensure that estimated edges exist in the underlying network structure, accruing only as evidence accumulates \citep{huth2025statistical}. Because edges and centrality each serve distinct inferential roles in applied network research, evaluating estimation methods on edge recovery alone provides an incomplete account of practical performance.

Based on these criteria, MCP, \(\ell_0\) approximating, and adaptive penalties are the most defensible choice, with BGGM only in larger sample sizes (\(N \ge\) 1,000). The adaptive penalties, in particular, were consistently among the top performers across every metric evaluated, demonstrating model selection consistency with high specificity even at small sample sizes (\(N\) = 100) and a considerable number of variables (e.g., \(p\) = 30; \citealp{chambon2026network,huth2025statistical}). Estimated edges are therefore unlikely to be false positives, with those recovered at larger sample sizes reflecting accumulated evidence rather than noise. Applied researchers can therefore be confident that their estimated networks, and the centrality parameters derived from them, are congruent with the underlying network structure.

\subsection{Limitations}\label{limitations}

This study was restricted to continuous data to evaluate GGM estimation methods under their standard assumptions (e.g., multivariate normal distribution). Importantly, the majority of psychological datasets are categorical, limiting the generalizability of our results \citep{huth2025statistical}. Polychoric and tetrachoric correlations offer feasible extensions to ordinal data, and future simulations should evaluate whether the present results hold with categorical data. Future research should also examine whether the penalties employed here can be adapted to Ising model estimation \citep{xue2012nonconcave}, given that current applications rely on a variant of the $\ell_1$ penalty \citep{van2014new}. Finally, although we provide formal proof of the EVT penalties' oracle properties, they fail to satisfy Fan et al.'s (\citeyear{fan2014strong}) condition (iv) for strong oracle properties, which specifies that the penalty's derivative must be exactly zero above a finite threshold, whereas the EVT penalties decay exponentially and approach zero asymptotically.

\subsection{Conclusion}\label{conclusion}

Accurate network estimation is prerequisite to the measurement validity of network models in psychology, where construct validation depends on whether nodes are connected and variable importance is inferred from these connections \citep{chambon2026network,van2024measurement}. Edge selection is therefore fundamentally a variable selection problem where the pairwise relationships determine which variables are deemed important. We establish a set of data-adaptive regularization penalties grounded in EVT that calibrate to the empirical partial correlation distribution, with results demonstrating that this adaptivity allows for high specificity while accumulating true positive edges as sample size increases. This property is consequential for both construct interpretation and model parsimony, as sparser and more accurate networks yield cleaner construct representations and more defensible inferences about variable importance. The Weibull penalty in particular offers an interpretable bridge between EVT and the empirical regularities of psychological partial correlations, making data-adaptive regularization an accessible and principled default for applied researchers seeking network estimates that are parsimonious and interpretable.

\newpage

\bibliographystyle{apacite}
\bibliography{Christensen_Citations}

\newpage

\appendix

\section{Distribution of Empirical Partial Correlations}\label{appendix-distribution}

\hspace{\parindent}

Using the 290 empirical datasets from Huth and colleagues (3 datasets were duplicated), we fit several distributions, using maximum likelihood estimators (MLE), to the absolute value of each dataset's partial correlations. These distributions were selected to span a range of motivations: Beta for its theoretical precedent as the null distribution of squared partial correlations under multivariate normality \citep{muirhead1982aspects}, Gamma for its flexibility, Normal as a baseline, and Exponential, Log-Normal, and Weibull for their visual congruence with the empirical densities (left in Figure \ref{fig:distributions}). Log-likelihood (\(\cal{L}\)) was computed for the fit of each distribution to each dataset's absolute partial correlations. The Beta distribution had the largest log-likelihood for the highest proportion of datasets (32.4\%) followed by Weibull (27.9\%) and Gamma (26.9\%).

\setcounter{figure}{0}
\renewcommand{\thefigure}{A\arabic{figure}}

\begin{figure}[H]
\centering
\includegraphics[width=\textwidth]{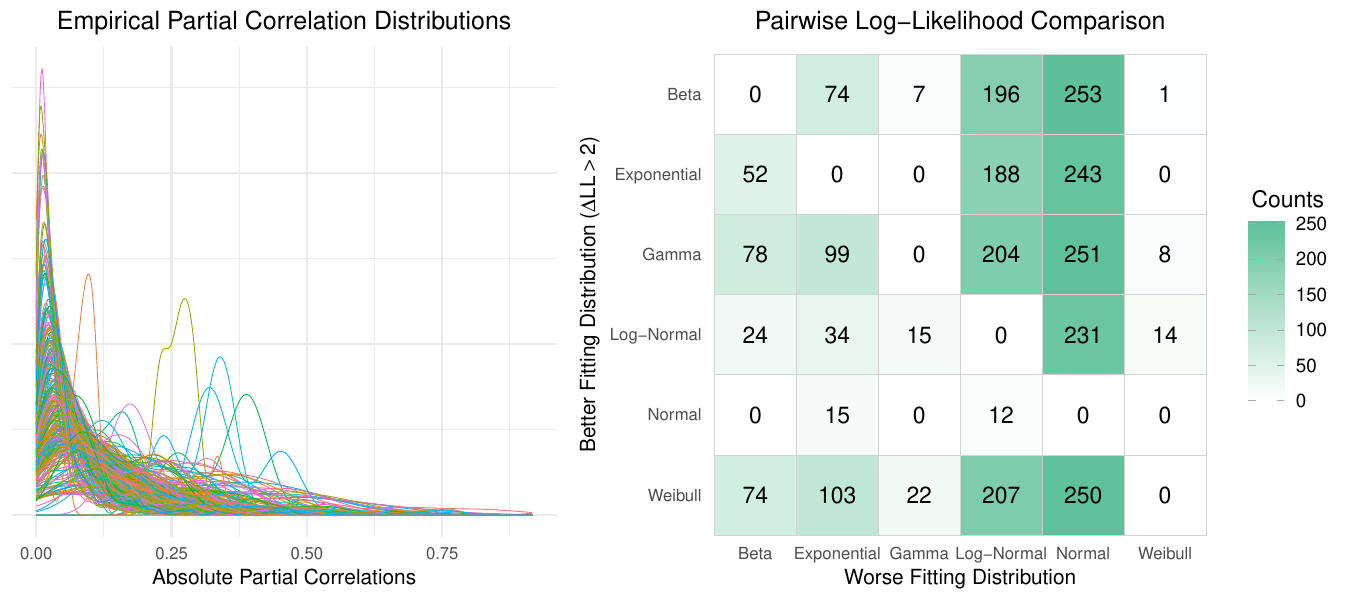}
\caption{\label{fig:distributions} Empirical distributions of the absolute partial correlations where each density curve represents a single sample in Huth et al.'s (2025) dataset collection (left). Comparision of log-likelihood between estimated maximum likelihood estimator distributions with the values going across representing the number of samples where the y-axis distribution fit at least 2 log-likelihood units greater than the x-axis distribution (right).}
\end{figure}

Looking deeper into the pairwise comparisons between distributions, a more nuanced pattern emerges (right in Figure \ref{fig:distributions}). When comparing cases where \(\Delta\cal{L}\) was greater than 2 between distributions, the Weibull distribution becomes favored with the Gamma distribution closely trailing.\footnote{The \(\Delta\mathcal{L} > 2\) threshold was selected for its interpretation of distribution complexity: a difference of two log-likelihood units means the worse fitting distribution would require at least one additional free parameter to achieve equivalent fit, following from AIC (\(-2\mathcal{L} + 2p\)). For the five two-parameter distributions compared here, \(\Delta\mathcal{L} > 2\) corresponds to \(\Delta\text{AIC} > 4\), which is conservative relative to the conventional \(\Delta\text{AIC} > 2\) threshold for substantial evidence \citep{burnham2002model}. The Exponential distribution, having one fewer free parameter than all other candidate distributions, is the sole exception: \(\Delta\mathcal{L} > 2\) corresponds exactly to \(\Delta\text{AIC} > 2\) for those comparisons.} Despite the Beta distribution having larger log-likelihoods in a higher proportion of datasets, its advantage over the Weibull distribution was relatively minimal (only one dataset fitting with \(\cal{L}\) \textgreater{} 2). Instead, the Weibull distribution often fit as well as or better than all other distributions with only a small subset fitting other distributions better (Beta = 1, Gamma = 8, Log-Normal = 14). This result extends Christensen and Choi's (\citeyear{christensen2026network}) finding beyond regularized edge weights to the raw partial correlation distributions, demonstrating that the Weibull distribution holds prior to any model selection step and generalizes across a vast number of psychological datasets that vary in content, number of variables (\(Mdn = 14, range = 3-86\)), and sample size (\(Mdn = 461, range = 23-388,286\)).

\newpage

\section{Extreme Value Distributions}\label{appendix-evt}

\begin{flushleft}
\textbf{Table B1} 

\textit{Forms and Connections between Extreme Value Distributions}
\end{flushleft}

\begin{table}
\centering
\renewcommand{\arraystretch}{1.5}
\begin{threeparttable}
\begin{tabular}{ccc}

\textbf{Distribution} & \textbf{PDF} & \textbf{CDF} \\ \hline

Exponential & \( \frac{1}{\gamma} e^{-\frac{|x|}{\gamma}} \) & \( 1 - e^{-\frac{|x|}{\gamma}} \) \\

Weibull & \( \left(\frac{k}{\gamma}\right)\left(\frac{|x|}{\gamma}\right)^{k-1} e^{-\left(\frac{|x|}{\gamma}\right)^k} \) & \( 1 - e^{-\left(\frac{|x|}{\gamma}\right)^k} \) \\

Gumbel & \( \frac{e^{\frac{|x|}{\gamma} - e^{\frac{|x|}{\gamma}}}}{\gamma} \) & \( 1 - e^{-e^{\frac{|x|}{\gamma}}} \) \\

Weibull \( \rightarrow \) Exponential & \( k = 1 \) & \( k = 1 \) \\

Weibull \( \rightarrow \) Gumbel & \( W_{\text{PDF}}\!\left(e^{x},\, k = \tfrac{1}{\gamma},\, \gamma = 1\right) \cdot e^{x} \) & \( W_{\text{CDF}}\!\left(e^{x},\, k = \tfrac{1}{\gamma},\, \gamma = 1\right) \) \\

\bottomrule
\end{tabular}
\begin{tablenotes}
\footnotesize
\vspace{0.25em}
\item \textit{Note}. All distributions are parameterized without their location parameter, which is set to zero.
\end{tablenotes}
\end{threeparttable}
\end{table}

\newpage

\section{Oracle Property Proofs}\label{appendix-proof}

\subsection{Proof of Theorem~\ref{thm:high-dimensional-rate}}

\begin{proof}
Without loss of generality, we assume that
\[
(\boldsymbol\Sigma_0)_{ii}=1,\qquad i=1,\ldots,p.
\]

Define
\[
\widehat{\mathbf D}
=\operatorname{diag}(S_{11}^{1/2},\ldots,S_{pp}^{1/2}),\qquad
\widehat{\mathbf R}=\widehat{\mathbf D}^{-1}\mathbf S
\widehat{\mathbf D}^{-1}.
\]
This is the preceding penalized likelihood applied after marginal
standardization. The penalized log-likelihood criterion used below is
\begin{equation}
Q_n(\mathbf K)=\log\det(\mathbf K)
-\operatorname{tr}(\widehat{\mathbf R}\mathbf K)
-\sum_{i<j}p_{\lambda_{ij}}(|k_{ij}|),
\qquad \mathbf K\succ0.
\label{eq:standardized-objective}
\end{equation}

and write
\[
L_n(\mathbf K)=-Q_n(\mathbf K)
=-\log\det(\mathbf K)+\operatorname{tr}(\widehat{\mathbf R}\mathbf K)
+\sum_{i<j}p_{\lambda_{ij}}(|k_{ij}|).
\]
\begin{equation}
Q_n(\mathbf K)
=\ell_n(\mathbf K)-\sum_{i<j}p_{\lambda_{ij}}(|k_{ij}|),
\qquad
\ell_n(\mathbf K)
=\log\det(\mathbf K)
-\operatorname{tr}(\widehat{\mathbf R}\mathbf K).
\label{eq:proof-objective}
\end{equation}
Let $\mathbf E=\widehat{\mathbf R}-\boldsymbol\Sigma_0$.The bounded-spectrum part of condition (i) keeps the variances of
$X_iX_j$ uniformly bounded. Standard Gaussian concentration therefore gives,
for sufficiently small $t>0$,
\[
P\!\left(|S_{ij}-(\boldsymbol\Sigma_0)_{ij}|>t\right)
\leq c_1\exp(-c_2nt^2),
\]
where $c_1$ and $c_2$ do not depend on $i$ or $j$. Taking a union bound over
the $p^2$ entries and setting $t$ proportional to
$\sqrt{\log p/n}$ yields
\[
\|\mathbf S-\boldsymbol\Sigma_0\|_{\max}=O_p(\tau_n).
\] Since
$\widehat R_{ij}=S_{ij}/\sqrt{S_{ii}S_{jj}}$ and
$\max_i|S_{ii}-1|=O_p(\tau_n)=o_p(1)$, the same bound holds for the sample
correlations:
\begin{equation}
\|\mathbf E\|_{\max}=O_p(\tau_n).
\label{eq:proof-rate-concentration}
\end{equation}
In addition, $E_{ii}=0$ for every $i$.
The assumption $s\log p=o(n)$ gives $r_n=o(1)$; when $p\to\infty$, the
bound also gives
$r_n^2=o(1)$.

 Let
$\boldsymbol\Delta=\boldsymbol\Delta^{\mathsf T}$ and consider
$\mathbf K=\mathbf K_0+\boldsymbol\Delta$ on the sphere
$\|\boldsymbol\Delta\|_F=Mr_n$, where $M>0$ is fixed. Because
$\|\boldsymbol\Delta\|_2\leq\|\boldsymbol\Delta\|_F=o(1)$, the ball
shrinks to zero and every matrix in the ball is eventually positive definite. Expanding
the log determinant with its integral remainder gives
\begin{align}
&L_n(\mathbf K_0+\boldsymbol\Delta)
-L_n(\mathbf K_0)\notag\\
&\quad=\operatorname{tr}(\mathbf E\boldsymbol\Delta)
+R_n(\boldsymbol\Delta)
+\sum_{i<j}\{p_{\lambda_{ij}}(|k_{ij}^0+\delta_{ij}|)
-p_{\lambda_{ij}}(|k_{ij}^0|)\},
\label{eq:proof-rate-expansion}
\end{align}
where the second-order remainder of $-\log\det(\mathbf K)$ is
\[
R_n(\boldsymbol\Delta)
=\operatorname{vec}(\boldsymbol\Delta)^{\mathsf T}
\left\{\int_0^1(1-u)
(\mathbf K_0+u\boldsymbol\Delta)^{-1}
\otimes(\mathbf K_0+u\boldsymbol\Delta)^{-1}\,du\right\}
\operatorname{vec}(\boldsymbol\Delta).
\]
The upper eigenvalue bound on $\mathbf K_0$ and $r_n=o(1)$ imply that, for
some $c_0>0$,
\begin{equation}
R_n(\boldsymbol\Delta)
\geq c_0\|\boldsymbol\Delta\|_F^2
=c_0M^2r_n^2
\label{eq:proof-rate-curvature}
\end{equation}
uniformly on the sphere.

Let $\boldsymbol\Delta_{\mathcal T}$ denote the part of
$\boldsymbol\Delta$ corresponding to the true edges. Since $E_{ii}=0$, the
diagonal entries make no contribution. The contribution from the $s$ true
edges is therefore bounded by
\begin{align}
\left|\operatorname{tr}
(\mathbf E\boldsymbol\Delta_{\mathcal T})\right|
&\leq C_1\|\mathbf E\|_{\max}\sqrt{s}\,
\|\boldsymbol\Delta\|_F \notag\\
&=O_p(Mr_n^2),
\label{eq:proof-rate-active-score}
\end{align}
because $\|\mathbf E\|_{\max}=O_p(\tau_n)$,
$\|\boldsymbol\Delta\|_F=Mr_n$, and $\sqrt{s}\tau_n=r_n$.

For an inactive pair, $k_{ij}^0=0$. Its contribution from the first and last
terms in \eqref{eq:proof-rate-expansion} is bounded below by
\begin{equation}
\sum_{(i,j)\in\mathcal A^c}
\left\{p_{\lambda_{ij}}(|\delta_{ij}|)
-2\|\mathbf E\|_{\max}|\delta_{ij}|\right\}.
\label{eq:proof-rate-null-combination}
\end{equation}
Every $|\delta_{ij}|$ is at most $Mr_n$. By concavity, the penalty
derivative is nonincreasing, and hence
\[
p_{\lambda_{ij}}(|\delta_{ij}|)
=\int_0^{|\delta_{ij}|}p_{\lambda_{ij}}^{\prime}(t)\,dt
\geq |\delta_{ij}|\inf_{0<t\leq Mr_n}
p_{\lambda_{ij}}^{\prime}(t).
\]
Condition \eqref{eq:null-derivative-common} and
\eqref{eq:proof-rate-concentration} therefore show that
\eqref{eq:proof-rate-null-combination} is nonnegative with probability
tending to one.

The assumption $\frac{k_{min}}{r_n}\to\infty$ says that the smallest true edge is much larger $Mr_n$. Thus the penalty
derivative remains below $\eta_n$. Therefore,
\begin{align}
&\sum_{(i,j)\in\mathcal A}
\{p_{\lambda_{ij}}(|k_{ij}^0+\delta_{ij}|)
-p_{\lambda_{ij}}(|k_{ij}^0|)\}\notag\\
&\qquad\geq-\eta_n\sum_{(i,j)\in\mathcal A}|\delta_{ij}|
\geq-\eta_n\sqrt{s}\,\|\boldsymbol\Delta\|_F
=-o(Mr_n^2).
\label{eq:proof-rate-active-penalty}
\end{align}
Indeed,
$\sqrt{s}\eta_n/r_n=(\sqrt n\eta_n)/\sqrt{\log p}=o(1)$ by (iii).

Combining \eqref{eq:proof-rate-curvature},
\eqref{eq:proof-rate-active-score},
\eqref{eq:proof-rate-null-combination}, and
\eqref{eq:proof-rate-active-penalty} yields
\[
\inf_{\|\boldsymbol\Delta\|_F=Mr_n}
\{L_n(\mathbf K_0+\boldsymbol\Delta)
-L_n(\mathbf K_0)\}
\geq c_0M^2r_n^2-O_p(Mr_n^2).
\]
 For any small $\epsilon>0$, choose $M$ sufficiently large so that, With probability at least $1-\epsilon$,
the right-hand side is positive. The continuous function $L_n$ attains a
minimum over the closed ball, and no minimizer can lie on its boundary. An interior minimizer is a local
minimizer on the positive-definite cone. Equivalently, $Q_n$ has a local
maximizer in the ball.

Finally, condition (i) ensures that converting the precision matrix into
partial correlations does not change the order of the estimation error.
In particular, for some constant $C>0$, with probability tending to one,
\[
\|\widehat{\boldsymbol\Omega}-\boldsymbol\Omega_0\|_F
\leq C\|\widehat{\mathbf K}-\mathbf K_0\|_F.
\]
The result now follows from \eqref{eq:high-dimensional-rate}.

Therefore, the proof of Theorem~\ref{thm:high-dimensional-rate} is complete.
\end{proof}

\subsection{Proof of Theorem~\ref{thm:oracle-fixed-p}}

\begin{proof}

Let
$\widetilde{\mathbf K}^{\,o}$ be the constrained maximum likelihood estimator
under the restriction that $k_{ij}=0$ for all $(i,j)\notin\mathcal A$, with associated partial correlation matrix having entries
$\widetilde\omega_{ij}^{\,o}$.

We shall construct a local maximizer with the correct support and then show
that it is first-order equivalent to the oracle estimator:
\begin{equation}
\sqrt n\left\{
(\widehat\omega_{ij})_{(i,j)\in\mathcal A}
-(\widetilde\omega_{ij}^{\,o})_{(i,j)\in\mathcal A}
\right\}
=o_p(1).
\label{eq:oracle-equivalence}
\end{equation}
We first restrict the parameter space to the true model,
\[
\mathcal M_{\mathcal A}
=\{\mathbf K\succ0:k_{ij}=0\text{ for }(i,j)\notin\mathcal A,\ i<j\}.
\]
Although $\ell_n$ is written in terms of the sample correlation matrix, it
is equivalent to the usual Gaussian likelihood. In fact, setting
$\mathbf L=\widehat{\mathbf D}^{-1}\mathbf K
\widehat{\mathbf D}^{-1}$ gives
\[
\log\det(\mathbf L)-\operatorname{tr}(\mathbf S\mathbf L)
=\ell_n(\mathbf K)-2\log\det(\widehat{\mathbf D}).
\]
This one-to-one transformation preserves both the zero pattern and the
partial correlations. Hence maximizing $\ell_n$ over
$\mathcal M_{\mathcal A}$ is equivalent to fitting the Gaussian model with
the true zero restrictions.

Let $\boldsymbol\vartheta$ collect the free entries of $\mathbf K$ in this
model, including the diagonal entries, and let
$\widetilde{\boldsymbol\vartheta}^{\,o}$ denote the oracle maximum likelihood
estimator. Since $p$ is fixed and the oracle information matrix is positive
definite, standard likelihood theory gives
\begin{equation}
\|\widetilde{\boldsymbol\vartheta}^{\,o}
-\boldsymbol\vartheta_0\|_2=O_p(n^{-1/2}).
\label{eq:proof-oracle-mle-rate}
\end{equation}

Fix $R>0$ and consider the closed ball
\[
\mathcal B_n(R)=\left\{\boldsymbol\vartheta:
\|\boldsymbol\vartheta-
\widetilde{\boldsymbol\vartheta}^{\,o}\|_2\leq R/\sqrt n\right\}.
\]
The oracle estimator is consistent and the radius of
$\mathcal B_n(R)$ is $R/\sqrt n$. Since $\sqrt{n}k_{min}\to\infty$, all active entries in this ball, and along the segments
from its center, remain larger than $k_{\min}/2$ with probability tending to one.
Moreover, the likelihood score vanishes at
$\widetilde{\boldsymbol\vartheta}^{\,o}$. The linear term in the Taylor
expansion therefore disappears, while positive definiteness of the oracle
information makes the quadratic term negative. Hence, for some $c_2>0$,
\begin{equation}
\sup_{\boldsymbol\vartheta\in\partial\mathcal B_n(R)}
\{\ell_n(\boldsymbol\vartheta)
-\ell_n(\widetilde{\boldsymbol\vartheta}^{\,o})\}
\leq-\frac{c_2R^2}{n}+o_p(n^{-1}).
\label{eq:proof-oracle-boundary}
\end{equation}
The mean-value theorem and Cauchy--Schwarz give, uniformly on the boundary,
\begin{align}
&\left|\sum_{(i,j)\in\mathcal A}
\{p_{\lambda_{ij}}(|k_{ij}|)
-p_{\lambda_{ij}}(|\widetilde k_{ij}^{\,o}|)\}\right|\notag\\
&\qquad\leq \eta_n\sum_{(i,j)\in\mathcal A}
|k_{ij}-\widetilde k_{ij}^{\,o}|
\leq\frac{R\sqrt{s}}{\sqrt n}\eta_n=o_p(n^{-1})
\label{eq:proof-oracle-active-penalty}
\end{align}
by condition (iii). Combining
\eqref{eq:proof-oracle-boundary} and
\eqref{eq:proof-oracle-active-penalty} shows that the restricted penalized
objective is smaller on the boundary than at the oracle estimator, with
probability tending to one. Its maximum over the closed ball is therefore
attained in the interior. Denote one such restricted local maximizer by
$\widehat{\mathbf K}^{\,r}$. The radius of the ball and
\eqref{eq:proof-oracle-mle-rate} give
\begin{equation}
\|\widehat{\mathbf K}^{\,r}-\mathbf K_0\|_F
=O_p(n^{-1/2}).
\label{eq:proof-restricted-rate}
\end{equation}

It remains to verify that this oracle-space local maximizer is also a local
maximizer of the full objective. Since $p$ is fixed,
\[
\|\widehat{\mathbf R}-\boldsymbol\Sigma_0\|_{\max}=O_p(n^{-1/2}).
\]
The inverse identity
\[
(\widehat{\mathbf K}^{\,r})^{-1}-\boldsymbol\Sigma_0
=(\widehat{\mathbf K}^{\,r})^{-1}
(\mathbf K_0-\widehat{\mathbf K}^{\,r})\boldsymbol\Sigma_0
\]
and \eqref{eq:proof-restricted-rate} imply
\begin{equation}
\|\nabla\ell_n(\widehat{\mathbf K}^{\,r})\|_{\max}
=O_p(n^{-1/2}).
\label{eq:proof-oracle-score}
\end{equation}
For fixed $D>0$, set
\[
b_n(D)=\min_{(i,j)\in\mathcal A^c}
\inf_{0<t\leq D/\sqrt n}p_{\lambda_{ij}}^{\prime}(t).
\]
Because $p$ is fixed, $r_n=c_r/\sqrt n$ and
$\tau_n=c_\tau/\sqrt n$ for positive constants $c_r,c_\tau$. Applying
\eqref{eq:null-derivative-common} with $M=D/c_r$ gives
$\sqrt n\,b_n(D)\to\infty$. Together with
\eqref{eq:proof-oracle-score}, this yields
\begin{equation}
\frac{\|\nabla\ell_n(\widehat{\mathbf K}^{\,r})\|_{\max}}
{b_n(D)}\longrightarrow0
\quad\text{in probability}.
\label{eq:proof-oracle-score-gap}
\end{equation}

On an event whose probability tends to one, continuity of matrix inversion
on the positive-definite cone provides a sufficiently small neighborhood of
$\widehat{\mathbf K}^{\,r}$ in which every matrix is positive definite,
\begin{equation}
2\max_{i<j}|\{\mathbf K^{-1}-\widehat{\mathbf R}\}_{ij}|
<\frac12b_n(D),
\label{eq:proof-oracle-uniform-score}
\end{equation}
and every inactive perturbation has magnitude at most $D/\sqrt n$. Shrink
the neighborhood if necessary so that its projection onto the diagonal and
active coordinates lies within the restricted local-maximum neighborhood.

Now consider a small symmetric perturbation
$\boldsymbol\Delta=\boldsymbol U+\boldsymbol V$, where
$\boldsymbol U$ changes only the diagonal and active entries, while
$\boldsymbol V$ changes only the inactive entries. Take the neighborhood
small enough that the entire segment
$\widehat{\mathbf K}^{\,r}+\boldsymbol U+t\boldsymbol V$,
$0\leq t\leq1$, remains inside it. Since both
$\widehat{\mathbf K}^{\,r}$ and $\boldsymbol U$ are zero on the inactive
coordinates, adding $\boldsymbol V$ changes each such coordinate from zero
to $v_{ij}$. Because the penalty derivative is nonincreasing,
\begin{equation}
p_{\lambda_{ij}}(|v_{ij}|)
=\int_0^{|v_{ij}|}p_{\lambda_{ij}}^{\prime}(t)\,dt
\geq b_n(D)|v_{ij}|.
\label{eq:proof-oracle-penalty-increment}
\end{equation}
The derivative of the likelihood with respect to one free symmetric
off-diagonal entry is
$2\{\mathbf K^{-1}-\widehat{\mathbf R}\}_{ij}$. Integrating
along the segment and applying
\eqref{eq:proof-oracle-uniform-score}--
\eqref{eq:proof-oracle-penalty-increment} gives
\begin{align}
&Q_n(\widehat{\mathbf K}^{\,r}+\boldsymbol U+\boldsymbol V)
-Q_n(\widehat{\mathbf K}^{\,r}+\boldsymbol U)\notag\\
&\qquad\leq-\frac12b_n(D)
\sum_{(i,j)\in\mathcal A^c}|v_{ij}|.
\label{eq:proof-oracle-inactive-decrease}
\end{align}
This inequality is strict whenever $\boldsymbol V\neq\mathbf0$. The
restricted local-maximum property gives
$Q_n(\widehat{\mathbf K}^{\,r}+\boldsymbol U)
\leq Q_n(\widehat{\mathbf K}^{\,r})$. Hence
$\widehat{\mathbf K}^{\,r}$ is a local maximizer of the full
objective. Set $\widehat{\mathbf K}
=\widehat{\mathbf K}^{\,r}$.

The construction gives $\widehat k_{ij}=0$ for every inactive pair. For an
active edge,
\[
|\widehat k_{ij}|
\geq|k_{ij}^0|
-\|\widehat{\mathbf K}-\mathbf K_0\|_{\max}.
\]
Since $\sqrt{n}k_{\min}\to\infty$ and \eqref{eq:proof-restricted-rate} makes the second term
$k_{min}$, every active estimate is nonzero with probability tending to one.
This proves \eqref{eq:oracle-selection}.

Finally, we compare the penalized estimator with the oracle maximum
likelihood estimator. On the selection event, both estimators belong to the
same oracle model and satisfy
\[
\nabla\ell_n(\widehat{\boldsymbol\vartheta})
-\boldsymbol d_n(\widehat{\boldsymbol\vartheta})=\mathbf0,
\qquad
\nabla\ell_n(\widetilde{\boldsymbol\vartheta}^{\,o})=\mathbf0,
\]
where the diagonal components of $\boldsymbol d_n$ are zero and each active
component is the signed derivative of the corresponding penalty.

Subtracting the two score equations and expanding the likelihood score
between the two estimators gives
\begin{equation}
\widehat{\boldsymbol\vartheta}
-\widetilde{\boldsymbol\vartheta}^{\,o}
=-\mathbf H_n^{-1}
\boldsymbol d_n(\widehat{\boldsymbol\vartheta}),
\label{eq:proof-oracle-score-expansion}
\end{equation}
where
\[
\mathbf H_n
=-\int_0^1\nabla^2\ell_n\!\left(
\widetilde{\boldsymbol\vartheta}^{\,o}
+t\{\widehat{\boldsymbol\vartheta}
-\widetilde{\boldsymbol\vartheta}^{\,o}\}\right)\,dt.
\]
Since both estimators are consistent and the oracle information matrix is
positive definite,
\[
\|\mathbf H_n^{-1}\|_2=O_p(1).
\]
Moreover, every estimated active edge has magnitude at least $k_{\min}/2$ with
probability tending to one. Since $s$ is fixed,
\[
\|\boldsymbol d_n(\widehat{\boldsymbol\vartheta})\|_2
\leq \sqrt{s}\max_{(i,j)\in\mathcal A}
\sup_{t\geq k_{\min}/2}p_{\lambda_{ij}}^{\prime}(t).
\]
Condition (iii) and
\eqref{eq:proof-oracle-score-expansion} therefore imply
\begin{equation}
\sqrt n\,\|\widehat{\boldsymbol\vartheta}
-\widetilde{\boldsymbol\vartheta}^{\,o}\|_2=o_p(1).
\label{eq:proof-oracle-free-equivalence}
\end{equation}

The oracle estimator is the maximum likelihood estimator under the true
zero restrictions. Standard fixed-dimensional likelihood theory gives
\begin{equation}
\sqrt n\left\{
(\widetilde\omega_{ij}^{\,o})_{(i,j)\in\mathcal A}
-(\omega_{ij}^0)_{(i,j)\in\mathcal A}
\right\}
\Longrightarrow
N_s\!\left(\mathbf0,
(\mathcal I_{\mathcal A}^{\,o})^{-1}\right).
\label{eq:proof-oracle-normality}
\end{equation}
The partial-correlation map
\[
(k_{ij},k_{ii},k_{jj})
\longmapsto-\frac{k_{ij}}{\sqrt{k_{ii}k_{jj}}}
\]
is continuously differentiable near $\mathbf K_0$. Hence
\eqref{eq:proof-oracle-free-equivalence} also gives
\eqref{eq:oracle-equivalence}. Combining this result with
\eqref{eq:proof-oracle-normality} and Slutsky's theorem proves
\eqref{eq:oracle-asymptotic-normality}.

Therefore, the proof of Theorem~\ref{thm:oracle-fixed-p} is complete.
\end{proof}

\subsection{Proof of Corollary~\ref{cor:oracle-exp}}

\begin{proof}
For $t>0$,
\[
p^{\prime}(t;\lambda_n,\gamma_n)
=\frac{\lambda_n}{\gamma_n}e^{-t/\gamma_n}.
\]
The penalty is continuous, vanishes at zero, and has a positive
decreasing derivative, so (ii) holds. For every fixed
$D>0$,
\[
\sqrt n\inf_{0<t\leq D/\sqrt n}p^{\prime}(t;\lambda_n,\gamma_n)
=\sqrt n\frac{\lambda_n}{\gamma_n}
\exp\!\left(-\frac{D}{\sqrt n\,\gamma_n}\right)\longrightarrow\infty
\]
by \eqref{eq:oracle-exp-tuning}. Monotonicity also gives
\[
\sqrt n\sup_{t\geq k_{\min}/2}p^{\prime}(t;\lambda_n,\gamma_n)
=\sqrt n\frac{\lambda_n}{\gamma_n}e^{-k_{\min}/(2\gamma_n)}
\longrightarrow0
\]
by \eqref{eq:oracle-exp-tuning}. Thus (iii) holds in the fixed-$p$ regime and
the theorem applies.

Therefore, the proof of Corollary~\ref{cor:oracle-exp} is complete.
\end{proof}

\subsection{Proof of Corollary~\ref{cor:oracle-weibull}}

\begin{proof}
The modified Weibull derivative is
\begin{equation}
p^{\prime}(t;\lambda_n,\gamma_n,k)
=\frac{\lambda_n k}{\gamma_n}
\left(\frac{t}{\gamma_n}\right)^{k-1}
\exp\!\left\{-\left(\frac{t}{\gamma_n}\right)^k\right\}.
\label{eq:proof-weibull-derivative}
\end{equation}
For $0<k\leq1$, this derivative decreases on $(0,\infty)$. For $k>1$, the
ordinary derivative increases up to
$x_n^*=\gamma_n\{(k-1)/k\}^{1/k}$ and decreases thereafter; the modification
replaces the increasing portion by its value at $x_n^*$. The modified
penalty is continuous, vanishes at zero, and has a nonnegative,
 nonincreasing derivative. Thus (ii) holds.

If $0<k\leq1$, monotonicity gives
\begin{align*}
&\sqrt n\inf_{0<u\leq D/\sqrt n}
p^{\prime}(u;\lambda_n,\gamma_n,k)\\
&\quad=\sqrt n\frac{\lambda_n k}{\gamma_n}
\left(\frac{D}{\sqrt n\,\gamma_n}\right)^{k-1}
\exp\!\left[-\left(\frac{D}{\sqrt n\,\gamma_n}\right)^k\right].
\end{align*}
The exponential factor tends to one; the power factor equals one when
$k=1$ and diverges when $0<k<1$. Hence the display diverges by
\eqref{eq:oracle-weibull-tuning}. If $k>1$, then
$D/\sqrt n<x_n^*$ eventually because $\sqrt n\gamma_n\to\infty$. The
modified derivative is constant on this interval, and
\begin{align*}
&\sqrt n\inf_{0<u\leq D/\sqrt n}
p^{\prime}(u;\lambda_n,\gamma_n,k)\\
&\quad=\sqrt n\frac{\lambda_n k}{\gamma_n}
\left(\frac{k-1}{k}\right)^{(k-1)/k}
\exp\!\left(-\frac{k-1}{k}\right)\longrightarrow\infty.
\end{align*}
This proves the first part of (iii).

For $0<k\leq1$, the supremum over $t\geq k_{\min}/2$ is attained at $k_{\min}/2$.
When $k>1$, $\gamma_n\to0$ and $\frac{1}{\gamma_n}k_{min}\to\infty$ imply
$k_{\min}/2>x_n^*$ eventually, so the same is true. Substitution in
\eqref{eq:proof-weibull-derivative} gives the expression in
\eqref{eq:oracle-weibull-tuning}, which tends to zero. The active-edge part
of (iii) follows.

Therefore, the proof of Corollary~\ref{cor:oracle-weibull} is complete.
\end{proof}

\subsection{Proof of Corollary~\ref{cor:oracle-gumbel}}

\begin{proof}
For $u\geq0$, let
\[
h(u)=e^{-u-e^{-u}}+e^{u-e^u}.
\]
The folded Gumbel derivative is
\[
p^{\prime}(t;\lambda_n,\gamma_n)
=\frac{\lambda_n}{\gamma_n}h(t/\gamma_n).
\]
For $u>0$,
\[
h^{\prime}(u)=e^{-u-e^{-u}}(-1+e^{-u})
+e^{u-e^u}(1-e^u)<0,
\]
and $h(0)=2/e$. The folded Gumbel penalty is continuous, vanishes at zero,
and has a positive decreasing derivative, so (ii) holds. For every fixed
$D>0$,
\[
\sqrt n\inf_{0<t\leq D/\sqrt n}p^{\prime}(t;\lambda_n,\gamma_n)
=\sqrt n\frac{\lambda_n}{\gamma_n}
h\!\left(\frac{D}{\sqrt n\,\gamma_n}\right)\longrightarrow\infty.
\]
For $u\geq0$, $e^{-u-e^{-u}}\leq e^{-u}$ and $e^u\geq2u$ gives
$e^{u-e^u}\leq e^{-u}$. Hence $h(u)\leq2e^{-u}$ and
\[
\sqrt n\sup_{t\geq k_{\min}/2}p^{\prime}(t;\lambda_n,\gamma_n)
\leq2\sqrt n\frac{\lambda_n}{\gamma_n}e^{-k_{\min}/(2\gamma_n)}
\longrightarrow0
\]
by \eqref{eq:oracle-gumbel-tuning}.

Therefore, the proof of Corollary~\ref{cor:oracle-gumbel} is complete.
\end{proof}

\newpage

\section{Penalty and Derivative Shapes}\label{appendix-penalties}

\subsection{Penalties}\label{penalties}

\setcounter{figure}{0}
\renewcommand{\thefigure}{D\arabic{figure}}

\begin{figure}[H]
\centering
\includegraphics[width=4.75in]{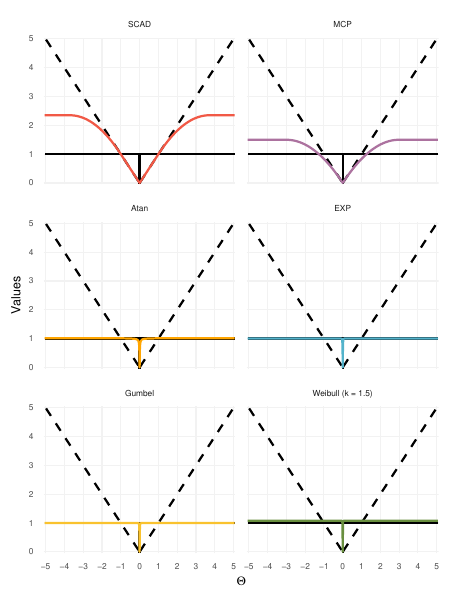}
\caption{\label{fig:penalties} Depictions of the penalty derivatives in this study organized by non-convex (top row), \(ell_0\) approximators (middle row), and adaptive penalties (bottom row). Hyperparameters were set to \(\lambda = 1\) and defaults for \(\gamma\). Weibull's shape parameter (\(k\)) was set to 1.5 to demonstrate how it influences the derivative relative to EXP. The dashed black line across the top of each panel represnts the \(\ell_1\) penalty and the solid black line down the y-axis and across the x-axis of each panel represents the \(\ell_0\) penalty.}
\end{figure}

\subsection{Derivatives}\label{derivatives}

\begin{figure}[H]
\centering
\includegraphics[width=4.75in]{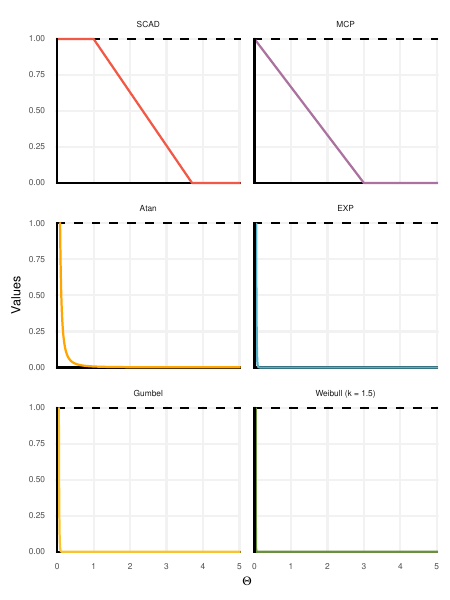}
\caption{\label{fig:derivatives} Depictions of the penalty derivatives in this study organized by non-convex (top row), \(ell_0\) approximators (middle row), and adaptive penalties (bottom row). Hyperparameters were set to \(\lambda = 1\) and defaults for \(\gamma\). Weibull's shape parameter (\(k\)) was set to 1.5 to demonstrate how it influences the derivative relative to EXP. The dashed black line across the top of each panel represnts the \(\ell_1\) penalty and the solid black line down the y-axis and across the x-axis of each panel represents the \(\ell_0\) penalty.}
\end{figure}

\newpage

\begin{flushleft}
\textbf{Table D1} 

\textit{Non-convex Penalties and Derivatives}
\end{flushleft}

\begin{table}[H]
\centering
\renewcommand{\arraystretch}{2.0}
\begin{threeparttable}
\begin{tabular}{ccc}

\textbf{Penalty} & \textbf{\( p_{\lambda,\gamma}(\theta) \)} & \textbf{\( p^\prime_{\lambda,\gamma}(\theta) \)} \\ \hline

Atan & \( \lambda\!\left(\gamma + \dfrac{2}{\pi}\right)\arctan\!\left(\dfrac{|\theta|}{\gamma}\right) \) & \( \dfrac{\lambda\gamma\!\left(\gamma + \frac{2}{\pi}\right)}{\gamma^2 + \theta^2} \) \\ [2em]

MCP & \( \begin{cases}
  \lambda|\theta| - \dfrac{\theta^2}{2\gamma} & |\theta| \leq \gamma\lambda \\[0.4em]
  \dfrac{1}{2}\gamma\lambda^2 & |\theta| > \gamma\lambda
\end{cases} \) & \( \begin{cases}
  \lambda - \dfrac{|\theta|}{\gamma} & |\theta| \leq \gamma\lambda \\[0.4em]
  0 & |\theta| > \gamma\lambda
\end{cases} \) \\ [2em]

SCAD & \( \begin{cases}
  \lambda|\theta| & |\theta| \leq \lambda \\[0.4em]
  \dfrac{2\gamma\lambda|\theta| - \theta^2 - \lambda^2}{2(\gamma - 1)} & \lambda < |\theta| \leq \gamma\lambda \\[0.4em]
  \dfrac{\lambda^2(\gamma + 1)}{2} & |\theta| > \gamma\lambda
\end{cases} \) & \( \begin{cases}
  \lambda & |\theta| \leq \lambda \\[0.4em]
  \dfrac{\gamma\lambda - |\theta|}{\gamma - 1} & \lambda < |\theta| \leq \gamma\lambda \\[0.4em]
  0 & |\theta| > \gamma\lambda
\end{cases} \) \\

\bottomrule
\end{tabular}
\end{threeparttable}
\end{table}

\newpage

\section{Transparency and Openness}\label{appendix-code}

The manuscript was prepared using Rmarkdown (version 2.31; \citealp{rmarkdown1,rmarkdown2,rmarkdown3}) and set in APA style using the \{papaja\} package (version 0.1.4; \citealp{papaja}). Network and data generation were implemented using the \{L0ggm\} package (version 0.1.1; \citealp{L0ggm}). The networks were estimated using the packages that are primarily used in the literature to ensure consistency with applied applications: EBICglasso (\{qgraph\} version 1.9.8; \citealp{qgraph}), BGGM (\{easybgm\} version 0.4.0; \citealp{easybgm}), and non-convex penalties of Atan, MCP, and SCAD (\{GGMncv\} version 2.1.2; \citealp{GGMncv}). The static EXP and data-adaptive penalties were applied using the \{L0ggm\} package. All visualizations were created using the \{ggplot2\} (version 4.0.3; \citealp{ggplot2}), \{ggpubr\} (version 0.6.3; \citealp{ggpubr}), and \{ggrepel\} (version 0.9.8; \citealp{ggrepel}) packages in R. All data generation and simulation scripts are available on the \href{https://osf.io/6hxkt}{\color{blue} OSF}.

\newpage

\section{Empirical Example}\label{appendix-empirical}

\hspace{\parindent}

To investigate the consequences of the differences revealed in the simulation study, Johnson's (\citeyear{johnson2014measuring}) large (\(N\) = 307,313) 300-item IPIP-NEO \citep{goldberg2006international} personality dataset was analyzed for a couple reasons. First, the 300 items of the IPIP-NEO organize into 30 theoretical facets and 5 theoretical factors (Big Five). These 300 items were aggregated into their respective facets using mean scores to create 30 continuous variables, aligning the empirical data with our simulation. Further, because the 30 facets organize into five factors, the theoretical structure of the data parallels our SBM generating mechanism (5 communities and 6 variables per community). Second, because of the large sample size, the empirical asymptotic properties of the estimators can be evaluated using the full sample as a self-referential ground truth. To minimize the influence of potential cultural differences in between-person personality, only participants from the U.S. (\(N\) = 212,265) were used in the analysis.

Two analyses were performed with the empirical data. First, a quantitative and qualitative comparison of EBICglasso, Weibull, and BGGM methods at a single, typical subsample size (\(N\) = 500; \cite{huth2025statistical}) to examine the applied consequences of these different methods. Second, model selection consistency by examining how well each method recovers its own full sample network across subsamples of increasing size. This latter analysis is provided as Supplementary Material (\href{https://osf.io/6hxkt/files/37srg}{\color{blue} OSF}; \cite{christensen2026adaptive_osf}).

\setcounter{figure}{0}
\renewcommand{\thefigure}{F\arabic{figure}}

\begin{landscape}
\begin{figure}[H]
\centering
\includegraphics[width=9in]{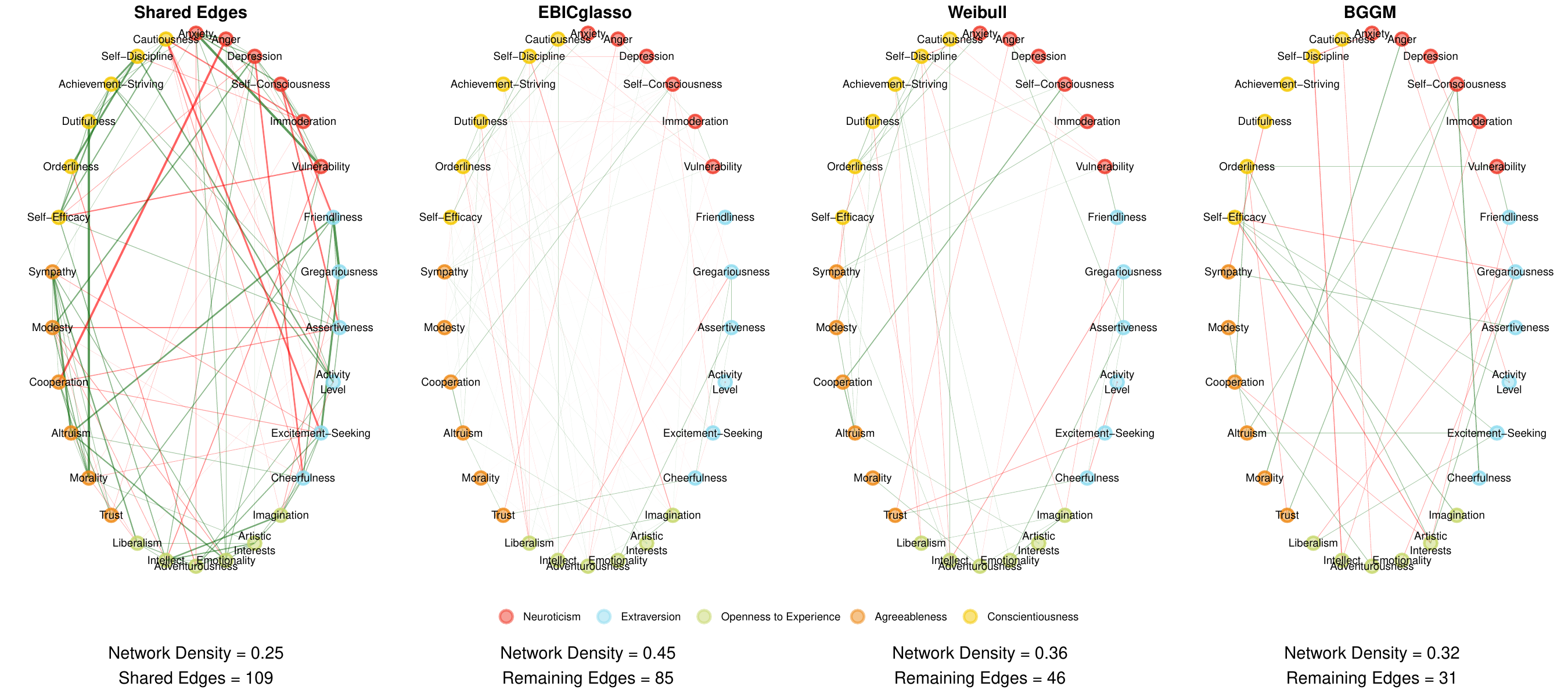}
\caption{\label{fig:empirical_comparison} Network visualizations from a single subsample of \(N\) = 500 respondents from the IPIP-NEO dataset. The leftmost panel displays edges shared across all three methods (EBICglasso, Weibull, and BGGM), with network density and shared edge count reported below. The remaining panels display each method's edges after removing those shared by all three methods. Network density and remaining edge count are reported below each panel. Node color indicates the theoretical Big Five factor membership (Neuroticism, Extraversion, Openness to Experience, Agreeableness, Conscientiousness). Edge color indicates the sign of the partial correlation (green = positive, red = negative) and edge thickness indicates the strength of the partial correlation.}
\end{figure}
\end{landscape}

A single, random subsample of 500 people was drawn from the U.S. respondents in the IPIP-NEO dataset and the EBICglasso, Weibull, and BGGM network estimation methods were applied (Figure \ref{fig:empirical_comparison}). Of the total edges estimated across all three methods, 109 were shared across all three networks. The remaining edges---those unique to a single method or shared between only two---differed systematically by method. EBICglasso produced the densest network (0.45) with the most remaining edges (85), followed by Weibull (0.36 and 46, respectively) and BGGM (0.32 and 31, respectively).

Weibull had no unique edges (i.e., edges not shared with either EBICglasso or BGGM), whereas EBICglasso and BGGM had 42 and 26 unique edges, respectively. Of Weibull's 155 total edges, 151 were shared with EBICglasso and 113 were shared with BGGM, exceeding the 109 edges shared between EBICglasso and BGGM. Together, these results indicate that Weibull occupies a consensus position between the high sensitivity of EBICglasso and high specificity of BGGM (Figure \ref{fig:sen_spec_simulation}).

The centrality rankings corroborated this consensus structure (see Appendix \ref{appendix-empirical}). Overall rank-order agreement (Kendall's \(\tau_b\)), was highest between EBICglasso and Weibull (\(\tau_b\) = 0.720) and substantially lower between Weibull and BGGM (\(\tau_b\) = 0.467) and EBICglasso and BGGM (\(\tau_b\) = 0.426). There was some general consensus among the top five ranked nodes across methods. Anxiety was second highest for EBICglasso and Weibull and seventh for BGGM. Friendliness was fourth for EBICglasso and Weibull and second for BGGM. Cautiousness was second for Weibull, fourth for BGGM, and sixth for EBICglasso.

Several facets, however, exhibited large rank discrepancies across methods and are therefore most consequential for applied interpretation. Assertiveness was ranked first by EBICglasso but fell to sixth and thirteenth under Weibull and BGGM, respectively. The converse pattern held for Excitement-Seeking, which ranked first under BGGM but eleventh under EBICglasso, with Weibull again occupying an intermediate position (fifth). Intellect ranked first for Weibull, third for EBICglasso, and eleventh for BGGM. Artistic Interests showed the starkest divergence: ranked fifth by BGGM but twenty-fifth by EBICglasso and twenty-eighth by Weibull.

These discrepancies carry direct consequences for researchers seeking to identify the single most central node, a practice that remains pervasive in the network psychometrics literature \citep{chambon2026network}. Our results demonstrate that these methods yield different conclusions depending entirely on which method was used: Assertiveness (EBICglasso), Intellect (Weibull), or Excitement-Seeking (BGGM). Notably, in the full sample (\(N\) = 212,265), all three networks converge on Anxiety as the most central node, which provides an empirical basis for evaluating the relative accuracy of each method's subsample rankings.

At \(N\) = 500, EBICglasso and Weibull both ranked Anxiety second whereas BGGM ranked it seventh, suggesting that EBICglasso and Weibull more closely approximated the most central node at this typical subsample size. This advantage is more favorable for Weibull than EBICglasso, however, because Weibull achieves this approximation with a more parsimonious model (i.e., a sparser network).

\newpage

\begin{flushleft}
\textbf{Table F1} 

\textit{Empirical Centrality Rankings from a Single Subsample}
\end{flushleft}

\begin{table}[H]
\centering
\renewcommand{\arraystretch}{0.85}
\begin{threeparttable}
\begin{tabular}{ccc}
\textbf{EBICglasso} & \textbf{Weibull} & \textbf{BGGM}\\
\midrule
\textbf{Assertiveness} & \textbf{Intellect} & \textbf{Excitement-Seeking}\\
Anxiety & Anxiety & Friendliness\\
\textbf{Intellect} & Cautiousness & Self-Efficacy\\
Friendliness & Friendliness & Cautiousness\\
Altruism & \textbf{Excitement-Seeking} & \textbf{Artistic Interests}\\
\addlinespace
Cautiousness & \textbf{Assertiveness} & Cooperation\\
Self-Efficacy & Altruism & Anxiety\\
Self-Discipline & Vulnerability & Depression\\
Depression & Self-Discipline & Sympathy\\
Cooperation & Depression & Morality\\
\addlinespace
\textbf{Excitement-Seeking} & Self-Efficacy & \textbf{Intellect}\\
Vulnerability & Sympathy & Self-Consciousness\\
Morality & Morality & \textbf{Assertiveness}\\
Emotionality & Emotionality & Anger\\
Sympathy & Cooperation & Altruism\\
\addlinespace
Dutifulness & Modesty & Cheerfulness\\
Achievement-Striving & Activity Level & Activity Level\\
Self-Consciousness & Adventurousness & Self-Discipline\\
Cheerfulness & Achievement-Striving & Vulnerability\\
Imagination & Cheerfulness & Modesty\\
\addlinespace
Adventurousness & Anger & Gregariousness\\
Gregariousness & Imagination & Emotionality\\
Anger & Self-Consciousness & Liberalism\\
Activity Level & Dutifulness & Imagination\\
\textbf{Artistic Interests} & Liberalism & Adventurousness\\
\addlinespace
Modesty & Gregariousness & Achievement-Striving\\
Orderliness & Orderliness & Dutifulness\\
Liberalism & \textbf{Artistic Interests} & Orderliness\\
Trust & Trust & Trust\\
Immoderation & Immoderation & Immoderation\\
\hline \addlinespace
\multicolumn{3}{c}{\textbf{Kendall's} \(\boldsymbol{\tau_b}\)} \\
EBICglasso & 0.720 & 0.426 \\
Weibull & --- & 0.467 \\
\bottomrule
\end{tabular}
\begin{tablenotes}
\small
\vspace{0.1em}
\item \textit{Note.} Facets are ranked from highest to lowest node strength estimated from a single random subsample of $N = 500$ US respondents drawn from the IPIP-NEO dataset. Each column presents the centrality ranking for each method. Kendall's $\tau_b$ values in the lower panel reflect pairwise rank-order agreement between each methods' rankings. Bolded facets appear in the top five centrality rankings of at least one method and exhibit the largest rank discrepancies across methods.
\end{tablenotes}
\end{threeparttable}
\end{table}

\newpage

\section{Evaluation Metrics}\label{appendix-metrics}

\subsection{Edge}\label{edge}

Edge recovery was evaluated on whether each network estimation method could successfully recover the correct presence (sensitivity) and absence (specificity) of edges in the simulated population network structure. Sensitivity and specificity are defined as \(\frac{TP}{TP + FN}\) and \(\frac{TN}{TN + FP}\) (respectively) where \(TP\) is an edge present in both the population and estimated networks, \(FN\) is an edge present in the population network but not the estimated network, \(TN\) is an edge absent from both the population and estimated networks, and \(FP\) is an edge present in the estimated network but not the population network.

Edge bias was defined as the mean bias error of TPs (\(\mathrm{MBE}_{TP}\)) and computed to investigate the bias in the asymptotic normality component of the oracle properties:

\[
\mathrm{MBE}_{TP} = \frac{1}{|TP|} \sum_{(i,j) \in TP} (\hat{\omega}_{ij} - \omega_{ij})
\]

\noindent where \(\hat{\omega}_{ij}\) and \(\omega_{ij}\) are the estimated and population partial correlations for edge \((i,j)\), \(TP\) is the set of true positive edges, and \(|TP|\) is their count. Negative values indicate underestimation (shrinkage toward zero).

The median absolute error for FPs and FNs were computed to provide a full picture of edges present in the simulated network: correctly included (TPs), incorrectly included (FPs), and incorrectly excluded (FNs). Figures for FPs and FNs that parallel the the \(MBE_{TP}\) are provided as Supplemental Materials (\href{https://osf.io/6hxkt/files/osfstorage}{\color{blue} OSF}; \citealp{christensen2026adaptive_osf}).

\subsection{Centrality}\label{centrality}

For centrality, node strength (\(NS_i\))---the sum of the absolute values of a node's edge weights---was computed given its prevalence in the literature:

\[
NS_i = \sum_{j=1}^p |\omega_{ij}|.
\]

\noindent Node strength was computed for the population network to obtain population centrality values.

Centrality bias was computed in the same way as \(MBE_{TP}\),

\[
MBE_{NS} = \frac{1}{p} \sum_{i=1}^p (\hat{NS}_i - NS_i),
\]

\noindent where \(\hat{NS}_i\) and \(NS_i\) are the estimated and population node strength for node \(i\), respectively. Centrality bias is not guaranteed to follow the same patterns as \(MBE_{TP}\) because FP edges contribute upward bias (larger estimated node strength) and FN edges contribute downward bias (lower estimated node strength).

Kendall's \(\tau_b\) was computed between estimated and population node strength values to evaluate rank-order congruence. Christensen and Choi (\citeyear{christensen2026network}) noted that, for samples drawn from a bivariate normal population, Kendall's \(\tau_b\) can be expressed as a Pearson's correlation using Greiner's relation \citep{greiner1909fehlersystem}: \(r = \sin\left(\frac{\pi}{2} \tau_b\right).\)

This conversion provides an intuitive interpretation of \(\tau_b\) by anchoring it to familiar Pearson correlation benchmarks. Christensen and Choi (\citeyear{christensen2026network}) recommended \(r = 0.90\) as a threshold for adequate recovery, corresponding to \(\tau_b = 0.713\) and a concordance of approximately 85\% of pairs. We adapted this framework using a graded classification: \(r = 0.70\) (adequate), \(r = 0.80\) (good), and \(r = 0.90\) (excellent), corresponding to \(\tau_b = 0.494\), \(0.590\), and \(0.713\), and concordance of approximately 75\%, 80\%, and 85\% of pairs, respectively. To simplify interpretability, we used values 0.50, 0.60, and 0.70 as adequate, good, and excellent rank-order congruence.

\end{document}